\documentclass[11pt]{article}

%long

\usepackage[top=1in,bottom=1in,left=1in,right=1in]{geometry}
\usepackage[T1]{fontenc}
\usepackage{fullpage}
\usepackage{amsthm}
\usepackage{thmtools}
\usepackage{thm-restate}
\usepackage{footnote}
\usepackage{tabstackengine}

\usepackage{amssymb}
\usepackage{amsmath}
\usepackage{graphicx,color}
\usepackage{amsmath}
\usepackage{mathtools}

\usepackage{enumitem}
\usepackage{subfigure}
\usepackage{epsfig}
\usepackage{minibox}
\usepackage{framed}
\usepackage{algorithm}
\usepackage{algpseudocode}

\usepackage{xcolor}

\usepackage{float}

\allowdisplaybreaks

\newtheorem{theorem}{Theorem}
\newtheorem{lemma}[theorem]{Lemma}
\newtheorem{corollary}[theorem]{Corollary}
\newtheorem{definition}[theorem]{Definition}

\newtheorem{property}{Property}
\newtheorem{criterion}{Criterion}
\newtheorem{invariant}{Invariant}

%\newcommand{\bproof}{\noindent{\bf Proof: }}
%\newcommand{\bcproof}{\noindent{\bf Proof of Claim: }}
%\newcommand{\eproof}{\hfill $\Box $\\}
%\newcommand{\enproof}{\hfill $\Box $}
%\newcommand{\ecproof}{\hfill $\diamondsuit$\\}
%\newcommand{\eprooftwo}{\hfill $\diamondsuit\ \Box$\\}
%\newcommand{\xecproof}{\hfill $\diamondsuit$}
%\newcommand{\elproof}{\hfill $\triangle$\\}
%\long\def\xclaim #1 {\bigskip\noindent{\bf Claim } {\it #1}\bigskip}

\algnewcommand\comment[1]{\hfill$\triangleright$ #1}

\def\case #1{\medskip{\bf Case}\ {\it #1:}}

\def\subcase #1{\medskip{\bf Subcase} #1: }
\def\yclaim #1 {\medskip{\bf Claim #1: }}
\long\def\xclaim #1 {\medskip\noindent{\bf Claim } {\it #1}\medskip}
\def\claim #1 #2 {\medskip\noindent{\bf Claim #1.} {\em #2}\medskip}
%\else
%\def\version{1}%long
%\def\subcase #1{\bigskip\noindent{\bf Subcase} #1: }
%\def\yclaim #1 {\bigskip\noindent{\bf Claim #1: }}
%\long\def\xclaim #1 {\bigskip\noindent{\bf Claim } {\it #1}\bigskip}
%\fi

\newcommand{\R}{\mathbb{R}}

\newcommand{\defi}{\operatorname{def}}
\newcommand{\surp}{\operatorname{surp}}

\newcommand{\Gelig}{G_{elig}}

\newcommand{\Chvatal}{Chv\'{a}tal}

\newcommand{\scale}{\operatorname{scale}}

\begin{document}

\title{Approximate Generalized Matching:\\
$f$-Matchings and $f$-Edge Covers}

\author{Dawei Huang\\
University of Michigan
\and
Seth Pettie\thanks{Supported by
NSF grants CCF-1217338, CNS-1318294, CCF-1514383, CCF-1637546, and CCF-1815316.}\\
University of Michigan}
\date{}

\maketitle
\thispagestyle{empty}
\setcounter{page}{0}

\begin{abstract}
In this paper we present 
almost linear time approximation schemes for several
generalized matching problems on nonbipartite graphs.
Our results include $O_\epsilon(m\alpha(m, n))$-time algorithms
for $(1-\epsilon)$-maximum weight $f$-matching and
$(1+\epsilon)$-approximate minimum weight $f$-edge cover.
As a byproduct, we also obtain direct algorithms for the exact cardinality versions
of these problems running in $O(m\alpha(m, n)\sqrt{f(V)})$ time, where $f(V)$ is the sum of degree constraints 
on the entire vertex set.

The technical contributions of this work include an efficient method for
maintaining {\em relaxed complementary slackness} in generalized matching problems
and approximation-preserving reductions between the $f$-matching and $f$-edge cover problems.
\end{abstract}

\newpage

\section{Introduction}

Many combinatorial optimization problems are known to be reducible to computing optimal matchings
in non-bipartite graphs~\cite{Edm65,Edm65b}.
These problems include computing $b$-matchings, $f$-factors, $f$-edge covers, $T$-joins,
undirected shortest paths (with no negative cycles), and bidirected flows, see \cite{Law01,Sch02,EJ03}.
These problems have been investigated heavily since Tutte's work in the 1950s \cite{Tutte,Pulleyblank}.
However, the existing reductions to graph matching are often inadequate:
they blow up the size of the input~\cite{Law01},
use auxiliary space~\cite{Gab83}, or piggyback on specific matching algorithms~\cite{Gab83}
like the Micali-Vazirani algorithm~\cite{MV80,Var94,Var12}. Moreover, most existing reductions destroy the dual structure of
optimal solutions and are therefore not {\em approximation preserving}.

In this paper we design algorithms for computing $f$-matchings and $f$-edge covers (both defined below)
in a direct fashion, or through efficient, approximation-preserving reductions.  
Because our algorithms are
based on the LP formulations of these problems
(in contrast to approaches using {\em shortest} augmenting walks~\cite{Gab83,MV80,Var94,Var12}),
they easily adapt to {\em weighted} and {\em approximate} variants of the problems.
Let us define these problems formally. Let $G = (V, E)$ be a graph that possibly contains parallel edges and self loops. For any subset $F$ of edges, we use $\deg_F(v)$ to indicate the degree of vertex $v$ in the subgraph induced by $F$. Notice that each self-loop on $v$ that is in $F$ contributes $2$ to this degree. We use $n$ to denote the number of vertices and $m$ to denote the number of edges, counting multiplicities. 
We define $f$-matching and $f$-edge covers as follows.

\begin{description}
\item[$f$-matching] An $f$-matching is a subset $F\subseteq E$ such that $\deg_F(v) \le f(v)$. $F$ is
{\em perfect} if the degree constraints hold with equality. In this case it is also called an \emph{$f$-factor}. 

\item[$f$-edge cover] An $f$-edge cover is a subset $F\subseteq E$ such that $\deg_F(v) \ge f(v)$. It is \emph{perfect} if all degree constraints hold with equality.
\end{description}

The \emph{maximum weight $f$-matching problem} is, given a graph 
$G = (V, E)$ and a weight function $w$ on $E$, to find an $f$-matching $F$ that maximizes $\sum_{e \in F} w(e)$. Similarly, the \emph{minimum weight $f$-edge cover problem} and the \emph{minimum weight $f$-factor problem} ask for an $f$-edge cover and an $f$-factor that minimize their respective weight.

For these three problems, we can assume, without lost of generality, that all the weights are nonnegative for different reasons. For maximum weight $f$-matching, it is safe to ignore any negative weight edges as discarding negative weight edges from $F$ can only improve the solution. For minimum weight $f$-edge cover, any optimum solution must include the set of all negative weight edges. Hence, we can include them into the solution and update the degree constraint accordingly. For minimum weight $f$-factor, since all $f$-factors are of the same size, 
we can translate all the weights by 
$-\min_{e \in E} w(e)$ without changing the optimal solution.

\paragraph{Classic Reductions.}
The classical reduction from $f$-matching to standard graph matching uses the
\emph{$b$-matching} problem as a stepping stone.
A $b$-matching is a function $x : E\rightarrow \mathbb{Z}_{\geq 0}$
(where $x(e)$ indicates \emph{how many} copies of $e$ are in the matching)
such that $\sum_{e \in \delta(v)} x(e) \le b(v)$,
i.e., the number of matched edges incident to $v$, counting multiplicity, is at most $b(v)$.
The maximum weight $f$-matching problem on $G=(V,E,w)$ can be reduced to $b$-matching by subdividing each edge $e=(u,v)\in E$
into a path $(u,u_e,v_e,v)$. Here $u_e,v_e$ are new vertices. We set the weight of the new edges to be $w(u,u_e)=w(v_e,v)=w(u,v) + W$ and $w(u_e,v_e)= 2 W$, where $W$ is the maximum weight in the original graph. The capacity function $b$ is given by $b(u_e)=b(v_e)=1$ for the new vertices and $b(u) = f(u)$ for the original vertices.

To see this reduction correctly reduces the maximum weight $f$-matching problem to the maximum weight $b$-matching problem, we first notice that if the original graph has an $f$-matching $M$ of weight $W^*$, then the new graph must contain a $b$-matching of weight at least $2W^* + 2W m$, where $m$ is the number of edges in the original graph: for every unmatched edge $(u, v)$, we take the edge $(u_e, v_e)$ into the $b$-matching and leave the two edges $(u, u_e)$ and $(v, v_e)$ unmatched. Otherwise we take only the edges $(u, u_e)$ and $(v, v_e)$.  For the other direction, every maximum
weight $b$-matching must be of this form, i.e., in every subdivided
edge, either the middle edge or two outer edges are in the $b$-matching.

This reduction blows up the number of vertices
to $O(m)$ and is not \emph{approximation preserving}.
The $b$-matching problem is easily reduced to standard matching by replicating each vertex $u$ $b(u)$
times, and replacing each edge $(u,v)$ with a bipartite $b(u)\times b(v)$ clique on its endpoints' replicas.
This step of the reduction is approximation preserving, but blows up the number of vertices and edges.
Both reductions together reduce $f$-matching to a graph matching problem on $O(m)$ vertices and
$O(f_{\max} m)$ edges.  Gabow~\cite{Gab83} gave a method for solving $f$-matching in $O(m\sqrt{f(V)})$ time using black-box calls to single iterations of the Micali-Vazirani~\cite{MV80,Var94,Var12} algorithm.

Observe that $f$-matching and $f$-edge cover are {\em complementary} problems:
if $C$ is an $f_C$-edge cover, the complementary edge set $F = E\backslash C$
is necessarily an $f_F$-matching, where $f_F(v) = \deg(v)-f_C(v)$.
Complementarity implies that any polynomial-time algorithm for one problem solves
the other in polynomial time, but it says nothing about the precise complexity of solving
them exactly or approximately.  Indeed, this phenomenon is very well known in the
realm of NP-complete problems.  For example, Maximum Independent Set and
Minimum Vertex Cover are complementary problems, but have completely different approximation profiles: Minimum Vertex Cover has a well-known polynomial time $2$-approximation algorithm, while it is NP-hard to approximate Maximum Independent Set within $n^{1-\epsilon}$ for any $\epsilon > 0$~\cite{Has99, zuckerman07}.
Gabow's $O(m\sqrt{f_F(V)})$ cardinality $f_F$-matching algorithm~\cite{Gab83}
implies that $f_C$-edge cover is computed in $O(m\sqrt{2m - f_C(V)}) = O(m^{3/2})$ time,
and says nothing about the approximability of $f_C$-edge cover.  As far as we are aware,
the fastest approximation algorithms for $f_C$-edge cover (see~\cite{Pothen-etal}) treat it as a
general weighted Set Cover problem on 2-element sets.
\Chvatal's analysis \cite{Chvatal79} shows the greedy algorithm is an $H(2)$-approximation,
where $H(2)=3/2$ is the 2nd harmonic number.

Our interest in the {\em approximate} $f$-edge cover problem is inspired by a new application
to anonymizing data in environments where users have different privacy demands; see~\cite{Pothen-etal,Choromanski-etal,KP16}.  Here the data records correspond to edges and the privacy demand of $v$ is measured by $f(v)$; the goal is to anonymize as few records to satisfy everyone's privacy demands.

\paragraph{New Results.}
We give new algorithms for computing $f$-matchings and $f$-edge covers
approximately and exactly.
\begin{itemize}
\item We give an $O_{\epsilon}(m\alpha(m, n)$-time $(1-\epsilon)$-approximation algorithm for maximum weight $f$-matching problem. The algorithm generalizes the $(1-\epsilon)$-approximate maximum weight matching algorithm by Duan and Pettie~\cite{DP14}
and improves on the $O(f(V)(m + n \log n))$
running time of Gabow~\cite{Gab18}.
The main technical contribution is the application of \emph{relaxed complementary slackness}~\cite{GT89,GT91,DP14} on $f$-matchings, 
and a new DFS-based search procedure for looking for a \emph{maximal} set of \emph{edge-disjoint} augmenting paths in $O(m \alpha(m, n))$ time.

\item We show that a folklore reduction from minimum weight 1-edge cover to maximum weight 1-matching (matching)
is approximation-preserving, in the sense that any $(1-\epsilon)$-approximation for matching gives a $(1+\epsilon)$-approximation
for edge cover.  This implies that 1-edge cover can be $(1+\epsilon)$-approximated in $O_\epsilon(m)$ time~\cite{DP14},
and that one can apply any number of simple and practical algorithms~\cite{DrakeHougardy,Pettie-Sanders,DP14}
to approximate 1-edge cover.  This simple reduction does \emph{not} extend to $f$-matchings/$f$-edge covers when
$f$ is arbitrary.

\item We give an $O_\epsilon(m\alpha(m, n))$-time $(1+\epsilon)$-approximation algorithm for weighted
$f_C$-edge cover, for any $f_C$.
Our algorithm follows from two results, both of which are somewhat surprising.  First, any approximate weighted
$f_F$-matching algorithm {\em that reports a $(1\pm \epsilon)$-optimal dual solution} can be transformed into a $(1+O(\epsilon))$-approximate weighted $f_C$-edge cover algorithm.
Second, such an $f_F$-matching algorithm exists, and its running time is $O_\epsilon(m\alpha(m, n))$.
The first claim is clearly false if we drop the approximate dual solution requirement (for the same reason that an $O(1)$-approximate vertex cover does not translate into an $O(1)$-approximate maximum independent set),
and the second is surprising because the running time is independent of the demand function $f_F$ and the magnitude of the edge weights.

\item As corollaries of these reductions, we obtain a new exact algorithm for minimum cardinality $f_C$-edge cover
running in $O(m\alpha(m, n)\sqrt{f_C(V)})$ time, rather than $O(m^{3/2})$ time (\cite{Gab83}),
and a direct algorithm for cardinality $f_F$-matching that runs in $O(m\alpha(m, n)\sqrt{f_F(V)})$ time,
without reduction~\cite{Gab83} to the Micali-Vazirani algorithm~\cite{MV80,Var94,Var12}.
\end{itemize}

The blossom structure and LP characterization of $b$-matching is considerably simpler
than the corresponding blossoms/LPs for $f$-matching and $f$-edge cover.
In the interest of simplicity,
one might want efficient code that solves (approximate) $b$-matching directly, without
viewing it as a special case of the $f$-matching problem.\footnote{The $b$-matching problem can be regarded as an $f$-matching problem on a multigraph in which there is implicitly an infinite supply of each edge.}
We do not know of such a direct algorithm.  Indeed, the structure of $b$-matching blossoms seems
to rely on strict complementary slackness, and is \emph{incompatible} with our main technical
tool, relaxed complementary slackness.\footnote{Using relaxed complementary slackness, matched and unmatched
edges have \emph{different} eligibility criteria (to be included in augmenting paths and blossoms) whereas $b$-matching blossoms require that all copies of an edge---matched and unmatched alike---are all eligible or all ineligible.} Thus, for somewhat technical reasons,
we are forced to solve approximation $b$-matching
using more sophisticated $f$-matching tools.

\paragraph{Comparison to Previous Results.}
Our linear time $(1-\epsilon)$-approximation algorithm for maximum weight $f$-matching can be seen as a direct generalization of the Duan-Pettie algorithm for approximate maximum weight matching ($1$-matching)~\cite{DP14}. The key technical ingredient  is the generalization of \emph{relaxed complementary slackness}, see \cite{GT89,GT91,DP14}, to $f$-matching, and a corresponding implementation of Edmonds' Search with relaxed complementary slackness. 
The former relies heavily on the ideas (blossoms, augmenting walks) defined by Gabow~\cite{Gab18}. 
Our implementation of
Edmonds' search involves finding augmenting walks in batches. The procedure of \cite[\S 8]{GT91} for matching finds a maximal set of vertex-disjoint augmenting paths.
We develop a corresponding procedure that finds a maximal set of edge-disjoint augmenting walks \underline{\emph{and}} cycles. 
Including alternating cycles in the output allows us to conduct the search in near linear time, and keep the search more organized and tree-structured.\footnote{These issues only arise when finding augmenting paths in \emph{batches}, 
not one-at-a-time~\cite{Gab18},
and only when the problem is $f$-matching,
not matching~\cite{DuanPS14,GT91}.}

\paragraph{Structure of the Paper.}
In Section~\ref{sec:basis} we give an introduction to the LP-formulation of generalized matching
problems and Gabow's formulation \cite{Gab18} for their blossoms and augmenting walks.
In Section~\ref{sec:reduction-1} we show that a folklore reduction from 1-edge cover to 1-matching is approximation-preserving
and in Section~\ref{sec:reduction-2} we reduce approximate $f$-edge cover to approximate $f$-matching. In Section~\ref{sec:approx} we
give an $O(Wm\alpha(m, n)\epsilon^{-1})$-time algorithm for $(1-\epsilon)$-approximate $f$-matching in graphs with weights in $[0,W]$ and then speed it up to $O(m\alpha(m, n) \epsilon^{-1} \log{\epsilon^{-1})}$, independent of the weight function. Section~\ref{sec:aug-walk} gives $O(m\alpha(m, n))$ algorithm to compute a maximal set of augmenting walks and alternating cycles; cf.~\cite[\S 8]{GT91}.

\section{Basis of $f$-matching and $f$-edge cover} \label{sec:basis}

This section reviews basic algorithmic concepts from matching theory and their generalizations to the $f$-matching and $f$-edge cover problems, e.g., LPs, blossoms, and augmenting walks.
These ideas lay the foundation for generalizing the Duan-Pettie algorithm \cite{DP14} for Approximate Maximum Weight Matching
to Approximate Maximum Weight $f$-Matching and Approximate Minimum Weight $f$-Edge Cover.

\paragraph{Notation.}
The input is a multigraph $G = (V, E)$ with a \emph{nonnegative} weight function $w: E \mapsto \mathbb{R}_{\geq 0}$. For any vertex $v$, define $\delta(v)$ and $\delta_0(v)$ be the set of non-loop edges and self-loops, respectively, incident on $v$. For $S \subseteq V$, let $\delta(S)$ and $\gamma(S)$ be the sets of edges with exactly one endpoint and both endpoints in $S$, respectively, so
$\delta_0(v) \subseteq \gamma(S)$ if $v\in S$.
For $T \subseteq E$, $\delta_T(S)$ denotes the intersection of $\delta(S)$ and $T$. By definition, $\deg_T(S) = |\delta_T(S)|$.

\subsection{LP formulation}

The maximum weight $f$-matching problem can be expressed as maximizing $\sum_{e \in E}  w(e) x(e)$, subject to the following constraints:
\begin{equation} \label{LP:F}
  \begin{aligned}
    &\sum_{e \in \delta(v)} x(e) + \sum_{e \in \delta_0(v)} 2x(e)\leq f(v), \mbox{ for all $v \in V$, } \\
    &\sum_{e \in \gamma(B) \cup I} x(e) \leq \left\lfloor \frac{f(B) + |I|}{2} \right\rfloor, \mbox{ for all $B \subseteq V, I \subseteq \delta(B)$,}\\
    &0 \leq x(e) \leq 1, \mbox{ for all $e \in E$.}
  \end{aligned}
\end{equation}

Here, the blossom constraint $\sum_{e \in \gamma(B) \cup I} x(e) \leq \left\lfloor \frac{f(B) + |I|}{2} \right\rfloor$ is a generalization of blossom constraint $\sum_{e \in \gamma(B)} x(e) \leq \left\lfloor \frac{|B|}{2} \right\rfloor$ in ordinary matching. The reason that we have a subset $I$ of incident edges in the sum is that the subset allows us to distinguish between matched edges that have both endpoints inside $B$ with those with exactly one endpoint. Any basic feasible solution $x$ of this LP is integral \cite[\S 33]{Sch02}, and can therefore be interpreted as a membership vector of an $f$-matching $F$. To certify (approximate) optimality of a solution, the algorithm works with the dual LP, which is:

\begin{equation} \label{LP:F-dual}
  \begin{aligned}
    \mbox{minimize}   \;\; & \sum_{v \in V} f(v)y(v) + \sum_{B \subseteq V, I \subseteq \delta(B)} \left\lfloor \frac{f(B) + |I|}{2} \right\rfloor z(B, I) + \sum_{e} u(e),\\
    \mbox{subject to} \;\; & yz_F(e) + u(e) \geq w(e), \mbox{ for all $e \in E$,} \\
                           & y(v) \geq 0, z(B, I) \geq 0, u(e) \geq 0.
  \end{aligned}
\end{equation}

Here the aggregated dual $yz_F: E \mapsto \R_{\geq 0} $ is defined as:
\begin{align*}
  yz_F(u, v) = y(u) + y(v) + \sum_{\substack{B, I: (u, v) \in \gamma(B) \cup I,\\ I \subseteq \delta(B))}} z(B, I).
\end{align*}

Notice that $u$ can be equal to $v$ when the edge is a self-loop.  Unlike matching, each $z$-value here is associated with the combination of a vertex set $B$ and a subset $I$ of its incident edges.

The minimum weight $f$-edge cover problem can be expressed as minimizing $\sum_{e \in E} w(e) x(e)$, subject to:
\begin{equation} \label{LP:EC}
  \begin{aligned}
     &\sum_{e \in \delta(v)} x(e) + \sum_{e \in \delta_0(v)} 2x(e) \geq f(v), \mbox{ for all $v \in V$,} \\
     &\sum_{e \in \gamma(B) \cup (\delta(B) \setminus I)} x(e) \geq \left\lceil \frac{f(B) - |I|}{2} \right\rceil, \mbox{ for all $B \subseteq V$ and $I \subseteq \delta(B)$,} \\
     &0 \leq x(e) \leq 1, \mbox{for all $e \in E$.}
   \end{aligned}
\end{equation}

With the dual program being:

\begin{equation} \label{LP:EC-dual}
  \begin{aligned}
    \mbox{maximize}   \;\; & \sum_{v \in V} f(v)y(v) + \sum_{B \subseteq V, I \subseteq \delta(B)} \left\lceil \frac{f(B) - |I|}{2} \right\rceil z(B, I) - \sum_{e \in E} u(e),\\
    \mbox{subject to} \;\; & yz_C(e) - u(e) \leq w(e), \mbox{ for all $e \in E$,} \\
                           & y(v) \geq 0, z(B, I) \geq 0, u(e) \geq 0,
  \end{aligned}
\end{equation}

where
\begin{equation*}
yz_C(u, v) = y(u) + y(v) + \sum_{\substack{B, I: (u, v) \in \gamma(B) \cup (\delta(B) \setminus I)\\ I \subseteq \delta(B)}} z(B, I). \footnote{We use $yz_F$ and $yz_C$ to denote the aggregated dual $yz$ for $f$-matching and $f$-edge cover respectively. We will omit the subscript if it is clear from the context.}
\end{equation*}

Following Gabow~\cite{Gab18},
both of our $f$-matching and $f$-edge cover algorithms maintain a dynamic \emph{feasible} solution $F \subseteq E$ that satisfies the primal constraints. We call edges in $F$ \emph{matched} and all other edges \emph{unmatched}, which is referred to as the \emph{type} of an edge.
A vertex $v$ is \emph{saturated} if $\deg_F(v) = f(v)$. It is \emph{unsaturated/oversaturated} if $\deg_F(v)$ is smaller/greater than $f(v)$. Given an $f$-matching $F$, the \emph{deficiency} $\defi(v)$ of a vertex $v$ is defined as $\defi(v) = f(v) - \deg_F(v)$. Similarly, for an $f$-edge cover $C$, the \emph{surplus} of a vertex is defined as $\surp(v) = \deg_C(v) - f(v)$.

\subsection{Blossoms}

We follow Gabow's \cite{Gab18} definitions and terminology for $f$-matching blossoms, augmenting walks, etc. A \emph{blossom} is a tuple $(B, E_B, \beta(B), \eta(B))$ where $B$ is the vertex set, $E_B$ is the edge set, $\beta(B) \in B$ is the \emph{base vertex}, and $\eta(B) \subset \delta(\beta(B)) \cap \delta(B)$, $|\eta(B)| \leq 1$, is the \emph{base edge set}, which may be empty. We often refer to the blossom by referring to its vertex set $B$. Blossoms can be defined inductively as follows.

\begin{restatable}{definition}{blossoms}\cite[Definition~4.2]{Gab18} \label{def:blossom}
A single vertex $v$ forms a \emph{trivial blossom}, or a \emph{singleton}. Here $B = \{v\}$, $E_B = \emptyset$, $\beta(B) = v$, and $\eta(B) = \emptyset$.

Inductively, let $B_0, B_1, \ldots, B_{l-1}$ be a sequence of disjoint singletons or nontrivial blossoms. Suppose there exists a closed walk $C_B = \{ e_0, e_1, \ldots, e_{l-1}\} \subseteq E$ starting and ending with $B_0$ such that $e_i \in B_i \times B_{i+1 \pmod{l}} $. The vertex set $B = \bigcup_{i = 0}^{l - 1} B_i$ is identified with a blossom if the following are satisfied:
\begin{enumerate}[itemsep=0ex]
  \item {\it Base Requirement:} If $B_0$ is a singleton, the two edges incident to $B_0$ on $C_B$, i.e., $e_0$ and $e_{l-1}$, must both be matched or both be unmatched.
  \item {\it Alternation Requirement:} Fix a $B_i, i \neq 0$. If $B_i$ is a singleton, exactly one of $e_{i-1}$ and $e_i$ is matched. If $B_i$ is a nontrivial blossom, then
  $\eta(B_i) \neq \emptyset$ and must be either $\{e_{i-1}\}$ or $\{e_i\}$.
\end{enumerate}

The edge set of the blossom $B$ is $E_B = C_B \cup (\bigcup_{i = 0}^{l - 1} E_{B_i})$ and its base is $\beta(B) = \beta(B_0)$. If $B_0$ is not a singleton, $\eta(B) = \eta(B_0)$. If $B_0$ is a singleton, $\eta(B)$ may either be empty or contain one edge, which is in $\delta(B) \cap \delta(B_0)$ that is the opposite type of $e_0$ and $e_{l-1}$.
\end{restatable}

Blossoms are classified as either \emph{light} or \emph{heavy} \cite[p.~32]{Gab18}. If $B_0$ is a singleton, $B$ is light (heavy) if $e_0$ and $e_{l-1}$ are both unmatched (both matched). Otherwise, $B$ is light or heavy iff $B_0$ is light or heavy.  Note that blossoms in the ordinary matching problem ($1$-matching) are always light, 
since no vertex is adjacent to 2 matched edges.

One purpose of blossoms is to identify parts of graph
that can be contracted and treated \emph{similar} to
individual vertices when searching for augmenting walks.
This is formalized by Lemma~\ref{lem:blossom-walk},
which can be seen as a restatement of Lemma 4.4 from \cite{Gab18} for $f$-matchings.

\begin{lemma} \label{lem:blossom-walk}
Let $v$ be an arbitrary vertex in $B$. There exists an even length alternating walk $P_0(v)$ (whose length could be $0$) and an odd length alternating walk $P_1(v)$ from $\beta(B)$ to $v$ using edges in $E_B$. Moreover, the terminal edge
incident to $\beta(B)$, if it exists,
must have a different type than the edge in $\eta(B)$,
if any. In other words, this
edge must be matched if $B$ is heavy and
unmatched if $B$ is light.
\end{lemma}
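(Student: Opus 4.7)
The plan is to prove this by structural induction on the recursive construction of $B$ in Definition~\ref{def:blossom}. Assume $B$ is a nontrivial blossom formed from sub-blossoms $B_0, B_1, \ldots, B_{l-1}$ along the closed walk $C_B = (e_0, e_1, \ldots, e_{l-1})$, and suppose $v \in B_j$. I would construct the two walks $P_0(v)$ and $P_1(v)$ by traversing $C_B$ in the two directions from $B_0$: a ``forward'' walk through $B_0, B_1, \ldots, B_j$ using edges $e_0, \ldots, e_{j-1}$, and a ``backward'' walk through $B_0, B_{l-1}, B_{l-2}, \ldots, B_j$ using $e_{l-1}, e_{l-2}, \ldots, e_j$. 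For every sub-blossom $B_i$ encountered, the inductive hypothesis supplies internal walks of both parities between $\beta(B_i)$ and any vertex of $B_i$, which I splice together at the entry/exit points of $B_i$ (via $\beta(B_i)$ if needed) to get an internal walk through $B_i$ from its entry vertex to its exit vertex (or to $v$, in the case of $B_j$).

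Two consistency checks are needed: (i) alternation must be preserved at every ``interface,'' i.e., where an internal walk of some $B_i$ meets $e_{i-1}$ or $e_i$, and where two internal walks meet at $\beta(B_i)$ after being spliced; and (ii) the parities of the two final walks can be made to differ so that we genuinely obtain both an even and an odd walk. For (i), when $B_i$ is a singleton the Alternation Requirement gives that exactly one of $e_{i-1}, e_i$ is matched, which suffices. When $B_i$ is nontrivial, exactly one of $\{e_{i-1}, e_i\}$ equals $\eta(B_i)$ and therefore meets $B_i$ at $\beta(B_i)$, so the internal walk either begins or ends at $\beta(B_i)$; the inductive terminal-edge condition then says the internal edge incident to $\beta(B_i)$ has the opposite type to $\eta(B_i)$, which is exactly what is needed for alternation with the other of $e_{i-1}, e_i$. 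When the entry and exit of $B_i$ are both off-$\beta(B_i)$, I would route the internal walk through $\beta(B_i)$ and splice two IH-walks whose terminal edges at $\beta(B_i)$ are both opposite in type to $\eta(B_i)$, which is internally consistent.

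For (ii), the IH gives each internal walk in both parities independently, so the parity choices at $B_j$ together with the forward-vs-backward choice of the outer walk realize both overall parities. The terminal-edge condition on $P_0(v), P_1(v)$ at $\beta(B)$ is inherited from the construction: if $B_0$ is a singleton, the terminal edge incident to $\beta(B)$ is $e_0$ or $e_{l-1}$, and the Base Requirement together with the definition of $\eta(B)$ (which, if nonempty, is the opposite type of $e_0$ and $e_{l-1}$) yields the condition; if $B_0$ is nontrivial, then $\eta(B) = \eta(B_0)$ and applying the IH to $B_0$ gives the condition directly.

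The hard part will be the careful bookkeeping of parity and alternation through the nested splices. In particular, I expect the most delicate verification is that the \emph{independent} choice of internal-walk parities at each sub-blossom really does suffice to realize both $P_0(v)$ and $P_1(v)$, rather than being constrained by a global parity of $C_B$; the terminal-edge condition from the IH is what makes these local choices compose cleanly, so the proof must track it invariantly at every step.
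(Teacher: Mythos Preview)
Your approach is the same as the paper's: induct on the blossom structure, traverse $C_B$ in both directions from $B_0$, and splice in the inductively supplied internal walks at each sub-blossom. Two points need correction, though neither is fatal.

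First, the splice you propose when ``entry and exit of $B_i$ are both off $\beta(B_i)$'' does \emph{not} preserve alternation: both IH-walks have their $\beta(B_i)$-terminal edge of the type opposite to $\eta(B_i)$, so at the splice point you get two consecutive edges of the \emph{same} type. Fortunately this case never arises. For every intermediate nontrivial $B_i$ the Alternation Requirement forces exactly one of $e_{i-1},e_i$ to equal $\eta(B_i)$, so one endpoint of the internal walk is always $\beta(B_i)$; and for $B_0$ the walk begins at $\beta(B_0)=\beta(B)$ by construction. The only place the bad case could occur is at $B_j$ when you enter along the direction that does \emph{not} use $\eta(B_j)$, and the paper simply discards that direction.

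Second, the internal-walk parities at intermediate sub-blossoms are not ``independent'': alternation at both $e_{i-1}$ and $e_i$ \emph{forces} the parity (even when $e_{i-1},e_i$ have different types, odd when they agree). So you cannot harvest both overall parities from free local choices along the way. The paper obtains the two parities as follows. If $B_j$ is a singleton, use both the forward and backward outer walks; their concatenation is an alternating closed walk whose first and last edges at $\beta(B)$ share a type, hence has odd total length, so one of the two walks is even and the other odd. If $B_j$ is nontrivial, use only the direction that enters $B_j$ via $\eta(B_j)$ (so the internal walk in $B_j$ starts at $\beta(B_j)$); the single genuinely free choice---the parity of the IH-walk inside $B_j$ from $\beta(B_j)$ to $v$---then yields both $P_0(v)$ and $P_1(v)$. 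Your treatment of the terminal-edge condition at $\beta(B)$ is correct and matches the paper.
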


\begin{proof}
We prove this by induction. The base case is a blossom $B$ consisting of a cycle of singletons $\left<v_0, v_1, \ldots, v_{l-1}\right>$, where $v = v_{i}$ for some $0 \leq i < l$. Then one of the two walks $\left<v_0, v_1, \ldots, v_i\right>$ and $\left<v_0, v_{l-1}, v_{l-2}\ldots, v_{i}\right>$
must be odd and the other must be even.

Now for the inductive step: Consider the cycle $C_B = \langle B_0, e_0, B_1, \ldots, e_{l-2}, B_{l-1} e_{l-1}, B_0\rangle$ where the $\{B_i\}$s are either singletons or contracted blossoms. Suppose the claim holds inductively for all nontrivial blossoms in $B_0, B_1, \ldots, B_{l-1}$. Let $v$ be an arbitrary vertex in $B$. We use $P_{B_i, j}(u)$ ($0 \leq i < l, j \in \{0, 1\}, u \in B_i$) to denote the walk $P_0(u)$ and $P_1(u)$ guaranteed in blossom $B_i$. There are two cases:

\case{1} When $v$ is contained in a singleton $B_k$. We examine the two walks $\widehat{P} = \langle B_0, e_0, B_1, e_1, \ldots, e_{k-1}, B_k \rangle$ and $\widehat{P}' = \langle B_0, e_{l-1}, B_{l-1}, e_{l-2}, \ldots, e_k, B_k \rangle$. Notice that $\widehat{P}$ and $\widehat{P}'$ are walks in the graph obtained by contracting all subblossoms $B_0, B_1, \ldots, B_{l-1}$ of $B$. By the inductive hypothesis, we can extend $\widehat{P}$ and $\widehat{P}'$ to $P$ and $P'$ in the original graph $G$ by replacing each $B_i$ with the walk in the original graph connecting the endpoints of $e_{i-1}$ and $e_{i}$ of the appropriate parity. In particular, if $e_{i-1}$ and $e_{i}$ are of different types, we replace $B_i$ with the even length walk guaranteed by the induction hypothesis. Otherwise, we replace it with the odd length walk. Notice that by the alternation requirement, one of $P$ and $P'$ must be odd and the other must be even.

\case{2} When $v$ is contained in a non-trivial blossom $B_k$, $0 \leq k < l$. Without loss of generality, $e_{k-1} = \eta(B_k)$. Consider the contracted walk $\widehat{P} = \langle e_0, e_1, \ldots, e_{k-1}\rangle$. We extend $\widehat{P}$ to an alternating walk $P$ in $E_B$ terminating at $e_{k-1}$ similar to Case 1. Then $P_{0}(v)$ and $P_1(v)$ are obtained by concatenating $P$ with the alternating walk $P_{B_k, 0}(v)$ or $P_{B_k, 1}(v)$, whichever has the right parity.

Notice that in both cases, the \emph{base requirement} in Definition~\ref{def:blossom} guarantees the starting edge of both alternating walks $P_0(v)$ and $P_1(v)$ alternates with 
the base edge $\eta(B)$.
\end{proof}

The main difference between blossoms in generalized matching problems and blossoms in ordinary matching is that $P_0(v)$ and $P_1(v)$ are \emph{both} meaningful for finding augmenting walks or blossoms. In ordinary matching, since each vertex has at most $1$ matched edge incident to it, an alternating walk enters the blossom at the base vertex via a matched edge and must leave with an unmatched edge.
As a result the subwalk inside the blossom is always even.
In generalized matching problems, this subwalk can be either even or odd, and may contain a cycle.
In general, an alternating walk enters the blossom at the base edge and can leave the blossom at \emph{any} nonbase edge.

Similar to ordinary matching algorithms, we \emph{contract} blossoms in order to find augmenting structures to improve our $f$-matching. Contracting a blossom $B$ means replacing $B$ with a single vertex $v$ with an $f$-value $f(v) = \sum_{v' \in B} f(v') - 2 |M \cap E[B]|$. Here $E[B]$ is the set of edges induced by the vertex set $B$.

Next we extend to notion of \emph{maturity} from \cite[p.~43]{Gab18} to $f$-matching and $f$-edge cover. Let us focus on $f$-matching first. Due to complementary slackness, we can only assign a positive $z$-value for the pair $(B, I)$ if it
satisfies the constraint
$|F \cap (\gamma(B) \cup I)| \leq \left\lfloor (f(B) + |I|)/2 \right\rfloor$ with equality. For ordinary matching, this requirement is implied by the combinatorial definition of blossoms. However, this is not the case for generalized matching, so we need a blossom to be \emph{mature} to fulfill the complementary slackness property.

\begin{definition}[Mature Blossom] \label{def:maturity}
A blossom is \emph{mature} w.r.t an $f$-matching $F$ if it satisfies the following:
\begin{enumerate}[itemsep=0pt]
  \item Every vertex $v \in B \setminus \{\beta(B)\}$ is saturated.
  \item $\defi(\beta(B)) = 0$ or $1$. If  $\defi(\beta(B)) = 1$, $B$ must be a light blossom and $\eta(B) = \emptyset$; If $\defi(\beta(B)) = 0$, $\eta(B) \neq \emptyset$.
\end{enumerate}
\end{definition}

The algorithm only contracts and manipulates mature blossoms. The definition for maturity is motivated by the requirement that a blossom processed by the algorithm must satisfy the following two properties:
\begin{itemize}[itemsep=0pt]
  \item Complementary slackness: A dual variable can be positive only if its primal constraint is satisfied with equality. In our algorithm, a blossom can have a positive $z$-value only if $|F \cap (\gamma(B) \cup I(B))| = \left\lfloor \frac{f(B) + |I(B)|}{2}\right\rfloor$, for a particular $I(B) \subseteq \delta(B)$ that we are going to define momentarily.
  \item Topology of augmenting walks: An augmenting walk in $G$ can only start with an unmatched edge. As a result, an augmenting walk in the contracted graph must start with a singleton or an unsaturated light blossom. If a blossom is unsaturated, it must be eligible to start an augmenting walk, and thus must be light.
\end{itemize}

According to Definition~\ref{def:maturity}, a mature blossom cannot be both heavy and unsaturated. Now we show that a mature blossom satisfies its corresponding primal constraint with equality. To show this fact, we first define the set $I(B)$
associated with blossom $B$~\cite[p.44]{Gab18},
which is the one for which $z(B,I(B))$ is perhaps non-zero.
\[
I(B) = \delta_F(B) \oplus \eta(B).
\]
Here $\oplus$ is the symmetric difference operator (XOR).
All other subsets $I$ of $\delta(B)$ will have $z(B, I) = 0$.
If $B$ is a mature blossom, then we have $|F \cap (\gamma(B) \cup I(B))| = \left\lfloor \frac{f(B) + |I(B)|}{2} \right\rfloor$

\begin{lemma} \label{lem:mature-factor}
If an $f$-matching blossom $B$ is mature, we have $|F \cap (\gamma(B) \cup I(B))| = \left\lfloor \frac{f(B) + |I(B)|}{2} \right\rfloor$.
\end{lemma}
\begin{proof}
We first sketch the idea of the proof.
Assume for simplicity that the deficiency is $0$ for every vertex $v \in B$, i.e.,
there are exactly $f(v)$ matched edges
incident to $v$, and every edge in $I(B)$ is matched. Then every matched edge $e \in F \cap \gamma(B)$ contributes $2$ to $f(B)$, one for each endpoint, and every edge $e \in F\cap I(B)$
contributes $1$ to $f(B)$ and 1 to $|I(B)|$. Thus we have:
\begin{equation*}
    2|F \cap (\gamma(B) \cup I(B))| = f(B) + |I(B)|.
\end{equation*}

Now we eliminate the assumption by a case analysis for the deficiency of $\beta(B)$. If $\defi(\beta(B)) = 0$, the assumption on deficiency holds, while all but possibly $1$ edge in $I(B)$, namely the base edge, are matched. This makes $f(B) + |I(B)|$ at least $2|F \cap (\gamma(B) \cup I(B))|$ and at most $2|F \cap (\gamma(B) \cup I(B))| + 1$, and the equality $|F \cap (\gamma(B) \cup I(B))| = \left\lfloor \frac{f(B) + |I(B)|}{2} \right\rfloor$ follows.

When $\defi(\beta(B)) = 1$, since $\eta(B) = \emptyset$, the assumption that $I(B)$ only contains matched edges holds. Since exactly $1$ of $B$'s vertices has deficiency $1$, we have:
\begin{equation*}
    2|F \cap (\gamma(B) \cup I(B))| + 1 = f(B) + |I(B)|
\end{equation*}
And the equality $|F \cap (\gamma(B) \cup I(B))| = \left\lfloor \frac{f(B) + |I(B)|}{2} \right\rfloor$ follows.
\end{proof}

We complete the discussion by giving the definition for maturity and the corresponding properties for mature blossoms in $f$-edge covers.

\begin{definition}[Mature Blossom for $f$-edge cover] \label{def:ec-maturity}
A blossom is \emph{mature} w.r.t an $f$-edge cover $F$ if it satisfies the following:
\begin{enumerate}[itemsep=0pt]
  \item Every vertex $v \in B \setminus \{\beta(B)\}$ is saturated: $\deg_F(v) = f(v)$.
  \item $\surp(\beta(B)) = 0$ or $1$. If $\surp(\beta(B)) = 1$, $B$ must be a heavy blossom and $\eta(B) = \emptyset$; If $\surp(\beta(B)) = 0$, $\eta(B) \neq \emptyset$.
\end{enumerate}
\end{definition}

\begin{lemma} \label{lem:mature-cover}
If an $f$-edge-cover blosom $B$ is mature, we have $|F \cap (\gamma(B) \cup (\delta(B) \setminus I(B)))| = \left\lceil \frac{f(B) - |I(B)|}{2} \right\rceil$.
\end{lemma}

\begin{figure}
  \centering
  \includegraphics[scale=1.0]{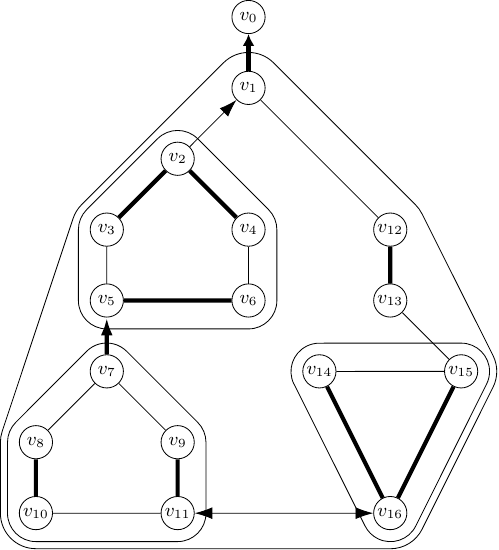}\\
  \vspace{20pt}
  \includegraphics[scale=1.0]{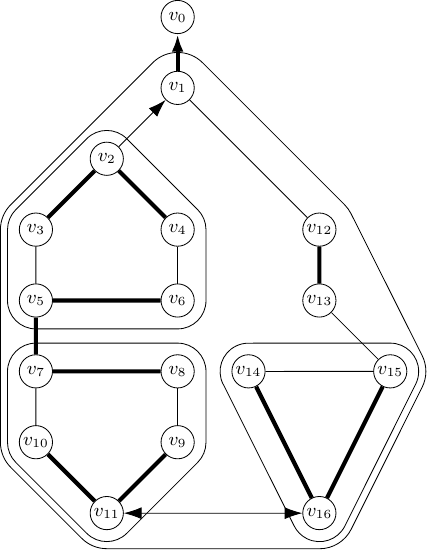}
  \caption{Two examples of contractible blossoms: Bold edges are matched and thin ones are unmatched. Blossoms are circled with a border. Base edges are represented with arrow pointing away from the blossom.}
\end{figure}

\begin{figure}
  \centering
  \includegraphics[scale=1.0]{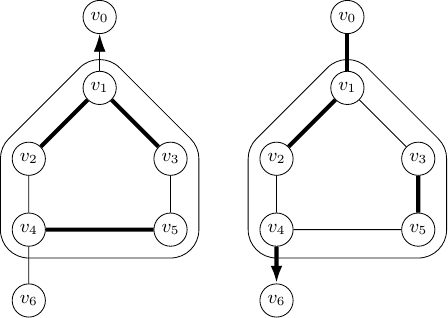}
  \caption{An example for how a blossom changes with an augmentation: here the augmenting walk is $\langle v_0, v_1, v_3, v_5, v_4, v_6\rangle$. Notice that after rematching, the base edge of the blossom changes from $(v_0, v_1)$ to $(v_4, u_6)$, and the blossom turns from a heavy blossom to a light one.} \label{fig:blossom2}
\end{figure}

\subsection{Augmenting/Reducing Walks}

Augmenting walks are analagous to augmenting paths from ordinary matching.
Complications arise from the fact that an $f$-matching blossom cannot be treated identically to a single vertex after it is contracted.
For example, in Figure 2, the two edges $(v_0, v_1)$ and $(v_4, v_6)$ incident to blossom $\left\{v_1, v_2, v_4, v_5, v_3\right\}$ are
of the same type, both before and after augmenting along the walk $\left<v_0, v_1, v_3, v_5, v_4, v_6\right>$.  This can never happen in ordinary matching!  Moreover, augmenting walks can begin and end at the same vertex and can visit the same vertex multiple times. Hence a naive contraction of a blossom into a single vertex loses key information about the internal structure of blossoms.
Definition~\ref{def:AP-f-factor}, taken from Gabow~\cite[p.28, p.44]{Gab18} characterizes when a walk in the contracted graph can be extended to an augmenting walk.

\begin{restatable}{definition}{augmentingwalk} \label{def:AP-f-factor}
Let $\widehat{G}$ be the graph obtained from $G$ by contracting a laminar set $\Omega$ of blossoms. Let $\widehat{P} = \langle B_0, e_0, B_1, e_1, \ldots, B_{l-1}, e_{l-1}, B_l\rangle$ be a walk in $\widehat{G}$. Here $\{e_i\}$ are edges and $\{B_i\}$ are nontrivial blossoms or singletons, with $e_i \in B_i \times B_{i+1}$ for all $0 \leq i < l$. We say $\widehat{P}$ is an \emph{augmenting walk} with respect to the $f$-matching $F$ if the following requirements are satisfied:
\begin{enumerate}[itemsep=0ex]
  \item {\it Terminal Vertices Requirement:} The terminals $B_0$ and $B_l$ must be unsaturated singletons or unsaturated light nontrivial blossoms. If $P$ is a closed walk ($B_0 = B_{l}$), $B_0$ must be a singleton and $\defi(\beta(B_0)) \geq 2$.
  \item {\it Terminal Edges Requirement:} If the terminal vertex $B_0$ ($B_l$) is a singleton, the incident terminal edge $e_0$ ($e_{l-1}$) must be unmatched. Otherwise it can be either matched or unmatched.
  \item {\it Alternation Requirement:} Let $B_i, 0 < i < l$, be an internal blossom. If $B_i$ is a singleton, exactly one of $e_{i-1}$ and $e_i$ is matched. If $B_i$ is a nontrivial blossom, $\eta(B_i) \neq \emptyset$ and must be one of $\{e_{i-1}\}$ or $\{e_{i}\}$.
\end{enumerate}
\end{restatable}

A natural consequence of the above definition is that an augmenting walk $\widehat{P}$ in $\widehat{G}$ can be extended to an augmenting walk $P$ in $G$. This is proved exactly as in Lemma~\ref{lem:blossom-walk}. We call $P$ the \emph{preimage} of $\widehat{P}$ in $G$ and $\widehat{P}$ the image of $P$ in $\widehat{G}$.

\begin{definition}
Let $\widehat{P}$ be an augmenting walk in $\widehat{G}$. An \emph{augmentation} along $\widehat{P}$ makes the following changes to $F$ and $\Omega$.
\begin{enumerate}[itemsep=0ex]
  \item Let $P$ be the preimage of $\widehat{P}$ in $G$. Update $F$ to $F \oplus P$.
  \item If $B \in \Omega$ is a blossom intersecting $P$, we set $\eta(B) \leftarrow (P \cap \delta(B)) \setminus \eta(B)$ and set $\beta(B)$ to the vertex in $B$ that is incident to the edge in $\eta(B)$. Notice that $|P \cap \delta(B)| = 1$ or $2$, and in the case when $|P \cap \delta(B)| = 1$, we must have $\eta(B) = \emptyset$.
\end{enumerate}
\end{definition}

Some remarks can be made here regarding connection to augmenting walks and mature blossoms.
\begin{itemize}
    \item A blossom that is not mature may contain an augmenting walk. Specifically, suppose $B$ is light and unsaturated. If any nonbase vertex $v \neq \beta(B)$ in $B$ is also unsaturated, the odd length alternating walk from $\beta(B)$ to $v$ satisfies the definition of an augmenting walk. Alternatively, if $\beta(B)$ has deficiency of $2$ or more, the odd length alternating walk from $\beta(B)$ to $\beta(B)$ is also augmenting.
  For these reasons, the algorithm is designed such that immature blossoms are never contracted.
    \item Augmentation never destroys maturity. In particular, it never creates an unsaturated heavy blossom. As a result, all blossoms we maintain stay mature throughout the entirety of the algorithm.
\end{itemize}

In $f$-edge cover, the corresponding notion is called \emph{reducing walk}. The definition of reducing walk can be naturally obtained from Definition~\ref{def:AP-f-factor} while replacing ``unsaturated'', ``deficiency'', and  ``light'' with ``oversaturated'', ``surplus'', and ``heavy'', and exchanging ``matched'' and ``unmatched''.
It is also worth pointing out that if an $f$-matching $F$ and an $f'$-edge cover $F'$ are \emph{complement to} each other, i.e., $F' = E \setminus F$ and $f(v) + f'(v) = \deg(v)$, and they have the same blossom set $\Omega$, then an augmenting walk $\widehat{P}$ for $F$ is also a reducing walk for $F'$.

\subsection{Complementary Slackness} \label{sec:CS}

To characterize an (approximately) optimal solution, we maintain dual functions: $y: V \mapsto \R_{\geq 0}$ and $z: 2^V \mapsto \R_{\geq 0}$. Here $z(B)$ is short for $z(B, I(B))$. We do not explicitly maintain the edge dual $u: E \mapsto \R_{\geq 0}$ since its minimizing value can be explicitly given by $u(e) = \max\{w(e) - yz(e), 0\}$. For $f$-matching $F$, the following property characterizes an approximate maximum weight $f$-matching:

\begin{property}[Approximate Complementary Slackness for $f$-matching] ~\label{pro:F-approx}
Let $\delta_1, \delta_2 \geq 0$ be nonnegative parameters. We say an $f$-matching $F$, duals $y, z,$ and the set of blossoms $\Omega$ satisfies $(\delta_1, \delta_2)$-approximate complementary slackness if the following hold:
\begin{enumerate}[itemsep=0ex]
  \item {\it Approximate Domination.} For each unmatched edge $e \in E \setminus F$, $yz(e) \geq w(e) - \delta_1$.
  \item {\it Approximate Tightness.} For each matched edge $e \in F$, $yz(e) \leq w(e) + \delta_2$.
  \item {\it Blossom Maturity.} For each blossom $B \in \Omega$, $|F \cap (\gamma(B) \cup I(B))| = \left\lfloor \frac{f(B) + |I(B)|}{2}\right\rfloor$.
  \item {\it Unsaturated Vertices' Duals.} For each unsaturated vertex $v$, $y(v) = 0$.
\end{enumerate}
\end{property}

\begin{lemma} ~\label{lem:F-approx}
Let $F$ be an $f$-matching in $G$ along with duals $y, z$ and let $F^*$ be the maximum weight $f$-matching. If $F, \Omega, y, z$ satisfy Property~\ref{pro:F-approx} with parameters $\delta_1$ and $\delta_2$, we have
\[
w(F) \geq w(F^*) - \delta_1 |F^*| - \delta_2|F|.
\]
\end{lemma}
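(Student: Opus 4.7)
The plan is to bound $w(F^*) - w(F)$ directly via the four approximate complementary slackness conditions, rather than invoking LP weak duality through the edge duals $u(e)$ (which is awkward here, since Property~\ref{pro:F-approx} only upper-bounds $yz(e)$ for matched edges, not lower-bounds it). I would first decompose
\[
w(F^*) - w(F) = w(F^* \setminus F) - w(F \setminus F^*),
\]
then apply \emph{Approximate Domination} to the unmatched edges of $F^* \setminus F$, obtaining $w(e) \leq yz(e) + \delta_1$, and \emph{Approximate Tightness} to the matched edges of $F \setminus F^*$, obtaining $w(e) \geq yz(e) - \delta_2$. This replaces weights by duals and yields
\[
w(F^*) - w(F) \;\leq\; \sum_{e \in F^*} yz(e) \;-\; \sum_{e \in F} yz(e) \;+\; \delta_1\,|F^* \setminus F| + \delta_2\,|F \setminus F^*|.
\]

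The heart of the argument is showing $\sum_{e \in F^*} yz(e) \leq \sum_{e \in F} yz(e)$. Expanding the definition of $yz(e)$ and swapping the order of summation, for any $f$-factor $F'$,
\[
\sum_{e \in F'} yz(e) \;=\; \sum_{v} y(v)\,\deg_{F'}(v) \;+\; \sum_{B \in \Omega} z(B)\,|F' \cap (\gamma(B) \cup I(B))|.
\]
For $F' = F$, \emph{Unsaturated Vertices' Duals} forces $y(v) = 0$ whenever $\deg_F(v) < f(v)$, so the first sum collapses to $\sum_v f(v)\,y(v)$; and \emph{Blossom Maturity} makes the second sum equal to $\sum_B z(B)\,\lfloor (f(B)+|I(B)|)/2 \rfloor$. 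For $F' = F^*$, primal feasibility of $F^*$ in~(\ref{LP:F}) gives $\deg_{F^*}(v) \leq f(v)$ and $|F^* \cap (\gamma(B) \cup I(B))| \leq \lfloor (f(B)+|I(B)|)/2\rfloor$, so, since $y,z \geq 0$, the sum over $F^*$ is termwise at most the sum over $F$, establishing the claim.

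Combining these bounds and using $|F^* \setminus F| \leq |F^*|$ and $|F \setminus F^*| \leq |F|$ yields $w(F^*) - w(F) \leq \delta_1|F^*| + \delta_2|F|$, as required. The subtlest step is the $yz$-sum identity for $F$: it needs both the vertex-dual condition (to replace $\deg_F(v)$ with $f(v)$) and blossom maturity (to convert the primal blossom inequality into equality on the specific $I(B) = \delta_F(B) \oplus \eta(B)$ attached to each $B \in \Omega$). Neither condition alone suffices; together they make both sides collapse cleanly, so that primal feasibility of $F^*$ supplies the dominated inequality.
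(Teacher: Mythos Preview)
Your proof is correct and rests on the same core computation as the paper's: expand $\sum_{e\in F'} yz(e)$ as $\sum_v y(v)\deg_{F'}(v) + \sum_{B\in\Omega} z(B)\,|F'\cap(\gamma(B)\cup I(B))|$, use conditions (3) and (4) to make this equal the dual objective when $F'=F$, and use primal feasibility of $F^*$ together with $y,z\ge 0$ to bound it when $F'=F^*$. The paper does take exactly the $u(e)$ route you anticipate and set aside: it defines $u(e)=w(e)-yz(e)+\delta_2$ on $F$ and $0$ off $F$, then passes through the full dual objective $\sum_v f(v)y(v)+\sum_B \lfloor\cdot\rfloor z(B)+\sum_e u(e)$. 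Your organization avoids $u$ by splitting $w(F^*)-w(F)$ over $F^*\setminus F$ and $F\setminus F^*$ first, which is slightly more direct and even yields the marginally sharper intermediate bound $\delta_1|F^*\setminus F|+\delta_2|F\setminus F^*|$ before relaxing. But the substance---which conditions are used where, and why both (3) and (4) are needed for the $F$-side identity---is identical in the two arguments.
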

\begin{proof}
We first define $u: E \mapsto \R$ as
\begin{equation*}
u(e) =
\begin{cases}
w(e) - yz(e) + \delta_2, &\mbox{ if $e \in F$.} \\
0,            &\mbox{ otherwise.}
\end{cases}
\end{equation*}
From approximate tightness, we have $u(e) \geq 0$ for all $e \in E$. Therefore, $yz(e) + u(e) \geq w(e) - \delta_1$ for \underline{all} $e \in E$ and $yz(e) + u(e) = w(e) + \delta_2$ for all $e \in F$. This gives the following:
\begin{align*}
  w(F) &= \sum_{e \in F} w(e) = \sum_{e \in F} (yz(e) + u(e) - \delta_2) \\
       &= \sum_{v \in V} \deg_F(v) y(v) + \sum_{B \in \Omega} |F \cap (\gamma(B) \cup I(B))| z(B) + \sum_{e \in F} u(e) - |F|\delta_2 \\
\intertext{By Property~\ref{pro:F-approx} (Unsaturated Vertices' Duals, Blossom Maturity, and the definition of $u$), this is equal to }
       &= \sum_{v \in V} f(v) y(v) + \sum_{B \in \Omega} \left\lfloor \frac{f(B) + |I(B)|}{2}\right\rfloor z(B) + \sum_{e \in E} u(e) - |F| \delta_2\\
       &\geq \sum_{v \in V} \deg_{F^*}(v) y(v) + \sum_{B \in \Omega} |F^* \cap (\gamma(B) \cup I(B))| z(B) + \sum_{e \in F^*} u(e) - |F| \delta_2 \\
       &= \sum_{e \in F^*}( yz(e) + u(e) ) - |F| \delta_2 \\
       & \geq \sum_{e \in F^*} (w(e) - \delta_1) - |F| \delta_2 = w(F^*) - |F^*| \delta_1 - |F| \delta_2.
\end{align*}
\end{proof}

We can easily extend the proof of Lemma~\ref{lem:F-approx} to show that if we have multiplicative errors for approximate domination/tightness, $F$ is an approximately optimal solution. Formally, if we have the following multiplicative version of Property~\ref{pro:F-approx}:

\begin{property}[Approximate Complementary Slackness for $f$-matching with Multiplicative Error] ~\label{pro:F-mult-approx}
Let $0 \leq \epsilon_1, \epsilon_2 < 1$ be nonnegative parameters.
We say an $f$-matching $F$, duals $y, z,$ and the set of blossoms $\Omega$ satisfies $(\epsilon_1, \epsilon_2)$-multiplicative approximate complementary slackness if it satisfies Property~\ref{pro:F-approx}(3,4), with Property~\ref{pro:F-approx}(1,2) being replaced with:
\begin{enumerate}[itemsep=0ex]
  \item {\it Approximate Domination.} For each unmatched edge $e \in E \setminus F$, $yz(e) \geq (1 - \epsilon_1) w(e)$.
  \item {\it Approximate Tightness.} For each matched edge $e \in F$, $yz(e) \leq (1 + \epsilon_2)w(e)$.
\end{enumerate}
\end{property}

We can show the following:
\begin{lemma} ~\label{lem:F-mult-approx}
Let $F$ be an $f$-matching in $G$ along with duals $y, z$ and let $F^*$ be the maximum weight $f$-matching. If $F, \Omega, y, z$ satisfy Property~\ref{pro:F-mult-approx} with parameters $\epsilon_1$ and $\epsilon_2$, we have
\[
w(F) \geq (1-\epsilon_1)(1+\epsilon_2)^{-1} w(F^*)
\]
\end{lemma}

We also give the corresponding theorems for $f$-edge covers:

\begin{property}[Approximate Complementary Slackness for $f$-edge cover] ~\label{pro:EC-approx}
Let $\delta_1, \delta_2 \geq 0$ be positive parameters. We say an $f$-edge cover $C$, with duals $y$, $z$ and blossom family $\Omega$ satisfies the $(\delta_1, \delta_2)$-{approximate complementary slackness} if the following requirements holds:
\begin{enumerate}[itemsep=0ex]
  \item {\it Approximate Domination.} For each unmatched edge $e \in E \setminus C$, $yz_C(e) \leq w(e) + \delta_1$.
  \item {\it Approximate Tightness.} For each matched edge $e \in C$, $yz_C(e) \geq w(e) - \delta_2$.
  \item {\it Blossom Maturity.} For each blossom $B \in \Omega$, $|C \cap (\gamma(B) \cup (\delta(B) \setminus I_{C}(B)))| = \left\lceil \frac{f(B) - |I_{C}(B)|}{2} \right\rceil$.
  \item {\it Oversaturated Vertices' Duals.} For each oversaturated vertex $v$, $y(v) = 0$.
\end{enumerate}
\end{property}

\begin{property}[Approximate Complementary Slackness for $f$-edge cover with Multiplicative Error] ~\label{pro:EC-mult-approx}
Let $0 \leq \epsilon_1, \epsilon_2 < 1$ be positive parameters. We say an $f$-edge cover $C$, with duals $y$, $z$ and blossom family $\Omega$ satisfies the $(\epsilon_1, \epsilon_2)$-{approximate complementary slackness} if it satisfies Property~\ref{pro:EC-approx}(3,4), with
Property~\ref{pro:EC-approx}(1,2) being replaced with:
\begin{enumerate}[itemsep=0ex]
  \item {\it Approximate Domination.} For each unmatched edge $e \in E \setminus C$, $yz_C(e) \leq (1 + \epsilon_1)w(e)$.
  \item {\it Approximate Tightness.} For each matched edge $e \in C$, $yz_C(e) \geq (1 - \epsilon_2)w(e)$.
\end{enumerate}
\end{property}

Recall that we are using the aggregated duals $yz_C$ for $f$-edge cover:
\begin{align*}
yz_C(u, v) = y(u) + y(v) + \sum_{B: (u, v) \in \gamma(B) \cup (\delta(B) \setminus I_{C}(B))} z(B)
\end{align*}

\begin{lemma}~\label{lem:ec-cs}
  Let $C$ be an $f$-edge cover with duals $y, z, \Omega$ satisfying Property~\ref{pro:EC-approx} with parameters $\delta_1$ and $\delta_2$, and let $C^*$ be the minimum weight $f$-edge cover. We have $w(C) \leq w(C^*) + \delta_1 |C^*| + \delta_2|C|$.
\end{lemma}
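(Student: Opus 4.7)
The plan is to follow the proof of Lemma~\ref{lem:F-approx} step by step, reversing inequality directions to reflect that $f$-edge cover is a minimization problem whose primal/dual slackness constraints point the opposite way. The four clauses of Property~\ref{pro:EC-approx} play roles opposite to their $f$-factor analogues in Property~\ref{pro:F-approx}: Approximate Tightness now gives a \emph{lower} bound on $yz_C(e)$ for $e\in C$, Approximate Domination gives an \emph{upper} bound on $yz_C(e)$ for $e\notin C$, and Oversaturated Vertices' Duals lets us turn $\deg_C(v)y(v)$ into $f(v)y(v)$ by exploiting $\deg_C(v)\geq f(v)$ rather than $\deg_F(v)\leq f(v)$.

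First I would introduce an auxiliary edge dual
\[
u(e) = \begin{cases} yz_C(e) - w(e) + \delta_2, & e\in C,\\ 0, & e\notin C,\end{cases}
\]
which is nonnegative by Approximate Tightness and satisfies the identity $w(e) = yz_C(e) - u(e) + \delta_2$ on $C$. Summing this identity yields
\[
w(C) = \sum_{e\in C}\bigl(yz_C(e) - u(e)\bigr) + \delta_2\,|C|.
\]
Next, expanding $yz_C(e)$ using its vertex/blossom definition and then applying Oversaturated Vertices' Duals (to replace $\deg_C(v)$ by $f(v)$) together with Blossom Maturity (to replace each blossom intersection by $\lceil (f(B)-|I_C(B)|)/2\rceil$) converts the right-hand side into the dual objective plus the $\delta_2|C|$ slack. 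Primal feasibility of $C^*$ then gives $\deg_{C^*}(v)\geq f(v)$ and the blossom inequality $|C^* \cap (\gamma(B)\cup(\delta(B)\setminus I_C(B)))|\geq \lceil(f(B)-|I_C(B)|)/2\rceil$ for the specific choice $I=I_C(B)$; combined with $u\geq 0$, these upper-bound the expression by $\sum_{e\in C^*}(yz_C(e)-u(e))+\delta_2|C|$. Finally, the pointwise bound $yz_C(e)-u(e)\leq w(e)+\delta_1$ holds on all of $E$---directly from Approximate Domination for $e\notin C$, and because the left side equals $w(e)-\delta_2\leq w(e)+\delta_1$ for $e\in C$---which delivers $w(C)\leq w(C^*)+\delta_1|C^*|+\delta_2|C|$.

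The proof is essentially mechanical once the correct $u$ has been written down; the main obstacle is simply bookkeeping the flipped signs so that every inequality points the right way. In particular, one must verify that the vertex-degree rewriting still works after the saturation inequality flips: for oversaturated $v$ the dual condition $y(v)=0$ erases the excess contribution, while for non-oversaturated $v$ the equality $\deg_C(v)=f(v)$ already holds because $C$ is a feasible edge cover. Once this translation is in hand, the blossom step via Maturity and the comparison against $C^*$ are purely symbolic and mirror the $f$-factor case exactly.
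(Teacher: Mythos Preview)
Your proposal is correct and is exactly the mirror image of the proof of Lemma~\ref{lem:F-approx}, which is precisely what the paper intends: it states only that the proof of Lemma~\ref{lem:ec-cs} is ``identical to Lemma~\ref{lem:F-approx}.'' Your choice of $u(e)=yz_C(e)-w(e)+\delta_2$ on $C$, the use of Oversaturated Vertices' Duals and Blossom Maturity to pass to the dual objective, and the final pointwise bound $yz_C(e)-u(e)\le w(e)+\delta_1$ are the appropriate sign-flipped analogues, so nothing further is needed.
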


\begin{lemma}~\label{lem:ec-mult-cs}
  Let $C$ be an $f$-edge cover with duals $y, z, \Omega$ satisfying Property~\ref{pro:EC-mult-approx} with parameters $\epsilon_1$ and $\epsilon_2$, and let $C^*$ be the minimum weight $f$-edge cover. We have $w(C) \leq (1 + \epsilon_1)(1 - \epsilon_2)^{-1}w(C^*) $.
\end{lemma}

\section{Connection Between $f$-matchings and $f$-Edge Covers} \label{sec:reduction}

The classical approach to
solve the $f$-edge cover problem is to reduce it to $f$-matching. Specifically, looking for a minimum weight $f_C$-edge cover $C$ for some function $f_C$ can be seen as choosing edges that are \emph{not} in $C$, which is a maximum weight $f_F$-matching where $f_F(u) = \deg(u) - f_C(u)$.

The main drawback of this reduction is that it yields inefficient algorithms.  For example, Gabow's algorithms \cite{Gab18}
for solving maximum weight $f_F$-matching scales linearly with $f_F(V)$, which makes it undesirable when $f_C$ is small.
Even when $f_C(V) = O(n)$, Gabow's algorithm runs in $O(m^2 + mn\log n)$ time.
Moreover, this reduction is not approximation-preserving. In other words, the complement of an \emph{arbitrary}
$(1-\epsilon)$-approximate maximum weight $f_F$-matching is not guaranteed to be a $(1+\epsilon)$-approximate $f_C$-edge cover.

In this section we establish two results. First we prove that a folklore reduction from $1$-edge cover to matching in non-negatively weighted graphs is approximation preserving. 
This allows us to use an efficient approximate matching algorithm for ordinary matching, such as \cite{DP14}, to solve the weighted $1$-edge cover problem. Then we establish the connection between approximate $f_F$-matching and approximate $f_C$-edge cover using approximate complementary slackness from the previous section. This will give a $(1+\epsilon)$-approximate minimum weight $f$-edge cover algorithm from our $(1-\epsilon)$ approximate maximum weight $f$-matching algorithm.

\subsection{Approximate Preserving Reduction from $1$-Edge Cover to $1$-Matchings} \label{sec:reduction-1}

The edge cover problem is a special case of $f$-edge cover where $f$ is $1$ everywhere. The minimum weight edge cover problem is reducible to maximum weight matching, simply by reweighting edges~\cite{Sch02}.
The reduction is as follows:
Let $e(v)$ be any edge with minimum weight incident to $v$ and let $\mu(v) = w(e(v))$. Define a new weight function $w'$ as follows
\[
w'(u, v) = \mu(u) + \mu(v) - w(u, v).
\]
Schrijver \cite[\S 27]{Sch02} showed the following theorem:

\begin{theorem}
Let $M^*$ be a maximum weight matching with respect to a nonnegative weight function $w'$, and $C = M^* \cup \{e(v): v \in V \setminus V(M)\}$. Then $C$ is a minimum weight edge cover with respect to weight function $w$.
\end{theorem}

We show this reduction is also approximation preserving.  Recall that the generally weighted
versions of these problems are reducible to the \emph{non-negatively}
weighted versions in linear time.

\begin{theorem}
Let $M'$ be a $(1-\epsilon)$-maximum weight matching with respect to nonnegative weight function  $w'$, and $C' = M' \cup \{e(v): v \in V \setminus V(M')\}$. Then $C'$ is a $(1+\epsilon)$-minimum weight edge cover with respect to weight function $w$.
\end{theorem}
\begin{proof}
Let $C^*$ and $M^*$ be the optimal edge cover and matching defined previously. By construction, we have
\begin{align*}
  w(C') &= w(M') + \mu(V \setminus V(M')) \\
        &= \mu(V(M')) - w'(M') + \mu(V \setminus V(M')) \\
        &= \mu(V) - w'(M')
\intertext{Similarly, we have $w(C^*) = \mu(V) - w'(M^*)$.
Continuing, we have}
        &\leq \mu(V) - (1-\epsilon) w'(M^*) \\
        &= w(C^*) + \epsilon w'(M^*) \\
        &\leq w(C^*) + \epsilon w(C^*) \\
        &= (1+\epsilon) w(C^*).
\end{align*}
The second to last inequality holds because $M^* \subseteq C^*$
and, by definition,
\[
w'(u, v) = \mu(u) + \mu(v) - w(u, v) \leq 2w(u, v) - w(u, v) = w(u, v).
\]
\end{proof}

The reduction does not naturally extend to $f$-edge cover. In the next section we will show how to obtain a $(1+\epsilon)$-approximate $f$-edge cover algorithm from a $(1-\epsilon)$-approximate $f$-matching within the primal-dual framework.

\subsection{From $f$-edge cover to $f$-matching} \label{sec:reduction-2}

We show that a primal-dual algorithm computing a $(1-\epsilon)$-approximate $f$-matching can be used to compute a $(1+\epsilon)$-approximate $f$-edge cover.  In particular, we show that if we have an $f'$-matching $F$ with blossoms $\Omega$ and duals $y,z$ satisfying Property~\ref{pro:F-approx}, and
an $f$-edge cover $C$ that is $F$'s complement, then the same blossom set $\Omega$ and duals $y$, $z$ can be also used to certify Property~\ref{pro:EC-approx} for $C$ with a same set of parameters. This is formally stated in the following lemma:
\begin{lemma} \label{lem:reduction}
If the duals $y, z, \Omega$ and an $f'$-matching $F$ satisfy Property~\ref{pro:F-approx} with parameters $\delta_1', \delta_2'$,
then the same duals $y, z, \Omega$ and the complementary $f$-edge cover $C = E \setminus F$ satisfy
Property~\ref{pro:EC-approx} with parameters $\delta_1 = \delta_2'$ and $\delta_2 = \delta_1'$.
\end{lemma}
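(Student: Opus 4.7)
The plan is to verify each of the four conditions of Property~\ref{pro:EC-approx} for $(C, y, z, \Omega)$ directly from the hypothesized Property~\ref{pro:F-approx} for $(F, y, z, \Omega)$. The whole argument will rest on two simple identities linking the $f'$-factor $F$ and its complement $C = E \setminus F$: a dual-equality $yz_F(e) = yz_C(e)$ for every edge $e$, and a saturation-swap between the two sides.

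First I would establish the dual equality. The algorithm supports $z$ only on pairs $(B, I(B))$, where on the $f$-factor side $I_F(B) = \delta_F(B) \oplus \eta(B)$ and on the edge-cover side $I_C(B) = \delta_C(B) \oplus \eta(B)$. Since $\delta_F(B)$ and $\delta_C(B)$ partition $\delta(B)$, a one-line check gives $I_F(B) = \delta(B) \setminus I_C(B)$. Hence, for any edge $e$ and any blossom $B$, the membership $e \in \gamma(B) \cup I_F(B)$ is equivalent to $e \in \gamma(B) \cup (\delta(B) \setminus I_C(B))$, so the two blossom-sums defining $yz_F(e)$ and $yz_C(e)$ coincide term-by-term; the vertex contributions agree by hypothesis.

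With the dual equality in hand, three of the four conditions become essentially tautological. Since $E \setminus C = F$, F-Tightness on $F$ translates into C-Domination on $E \setminus C$ with $\delta_1 = \delta_2'$, and F-Domination on $E \setminus F = C$ translates into C-Tightness with $\delta_2 = \delta_1'$. For the dual condition on vertices, a vertex $v$ is oversaturated w.r.t.\ $C$ iff $\deg_C(v) = \deg(v) - \deg_F(v) > f(v) = \deg(v) - f'(v)$, iff $v$ is unsaturated w.r.t.\ $F$, so the F-property $y(v) = 0$ transfers.

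The hard part will be Blossom Maturity. Writing $k_F = |F \cap (\gamma(B) \cup I_F(B))|$ and $k_C = |C \cap (\gamma(B) \cup (\delta(B) \setminus I_C(B)))|$, the facts $E = F \sqcup C$ and $\delta(B) \setminus I_C(B) = I_F(B)$ give $k_F + k_C = |\gamma(B)| + |I_F(B)|$. On the target side, using $f'(B) + f(B) = 2|\gamma(B)| + |\delta(B)|$ and $|\delta(B)| = |I_F(B)| + |I_C(B)|$ yields $(f'(B) + |I_F(B)|) + (f(B) - |I_C(B)|) = 2(|\gamma(B)| + |I_F(B)|)$, an even number. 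So the two summands share parity, and a short parity check gives $\lfloor (f'(B)+|I_F(B)|)/2 \rfloor + \lceil (f(B)-|I_C(B)|)/2 \rceil = |\gamma(B)| + |I_F(B)|$. Plugging in F-Maturity for $k_F$ and subtracting yields $k_C = \lceil (f(B)-|I_C(B)|)/2 \rceil$ exactly. This parity-matching identity is the only place the proof does more than mechanical translation between the two sides.
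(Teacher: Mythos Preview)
Your proof is correct. For the dual identity $yz_F=yz_C$ and for conditions (1), (2), (4), your argument is essentially the same as the paper's: both establish $I_F(B)=\delta(B)\setminus I_C(B)$ and then read off the edge and vertex conditions by direct translation.

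The genuine difference is in Blossom Maturity. The paper argues semantically, by cases on whether $\eta(B)$ lies in $F$ or in $C$: in each case it unpacks the equality $|C\cap(\gamma(B)\cup(\delta(B)\setminus I_C(B)))|=\lceil (f(B)-|I_C(B)|)/2\rceil$ to conclude that every vertex of $B$ is saturated for $C$, hence saturated for $F$, hence the $f'$-factor maturity equality holds (and symmetrically in the other direction). Your route is purely arithmetic: from $k_F+k_C=|\gamma(B)|+|I_F(B)|$ and the identity $(f'(B)+|I_F(B)|)+(f(B)-|I_C(B)|)=2(|\gamma(B)|+|I_F(B)|)$ you deduce matching parities, whence $\lfloor\cdot\rfloor+\lceil\cdot\rceil$ add up exactly, and subtraction gives $k_C$. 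This avoids the case split entirely and is arguably cleaner; the paper's argument, on the other hand, makes explicit the structural fact (all vertices of $B$ saturated) that underlies maturity, which may be conceptually useful elsewhere. Both are valid and neither uses anything the other needs.
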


\begin{proof}
It is easy to see a vertex is oversatured in an $f$-edge cover if and only if it is unsaturated in its complementary $f$-matching. Therefore, Property~\ref{pro:EC-approx}(4) (Oversaturated Vertices' Duals) and Property~\ref{pro:F-approx}(4) (Unsaturated Vertices' Duals) are equivalent to each other.

To show Property~\ref{pro:EC-approx}(1) is equivalent to Property~\ref{pro:F-approx}(2), and Property~\ref{pro:EC-approx}(2) is equivalent to Property~\ref{pro:F-approx}(1), it suffices to show that the function $yz_F$ for $f'$-matching $F$  agrees with the function $yz_C$ for its complementary $f$-edge cover $C$. Recall that
\begin{align*}
    &yz_C(u, v) = y(u) + y(v) + \sum_{B: (u, v) \in \gamma(B) \cup (\delta(B) \setminus I_C(B))} z(B), \\
    &yz_F(u, v) = y(u) + y(v) + \sum_{B: (u, v) \in \gamma(B) \cup I_F(B)} z(B).
\end{align*}

\begin{figure}
  \centering
  \includegraphics[scale=1.2]{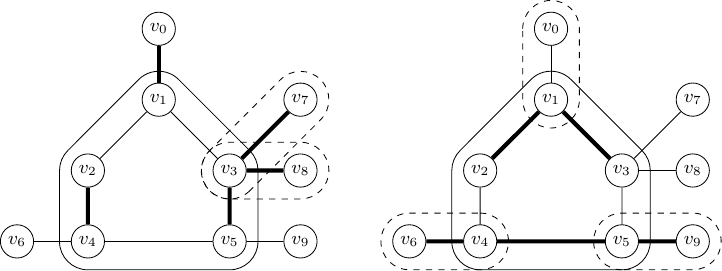}
  \caption{Illustration on relation between $I$-set of an $f$-matching and the $I$-set of its complementary $f'$-edge cover. Left: an $f$-matching and its blossom set. Right: Its complementary $f$-edge cover. Their $I$-sets are circled (dashed).}\label{fig:blossom-3}
\end{figure}

Here $I_C$ and $I_F$ refer to the $I$-sets of a blossom with respect to the
$f$-edge cover $C$ and the $f'$-matching $F$. This reduces to showing that $I_F(B) = \delta(B) \setminus I_C(B)$:
\begin{align*}
  I_F(B) &= \eta(B) \oplus \delta_F(B) = \eta(B) \oplus (\delta(B) \oplus \delta_C(B)) \\
       &= \delta(B) \oplus (\eta(B) \oplus \delta_C(B)) = \delta(B) \setminus I_C(B).
\end{align*}
Therefore, in $yz_F(e)$ and $yz_C(e)$, $z$-values are summed up over the same set of blossoms in $\Omega$. In other words, $yz_F(e) = yz_C(e)$ for each $e \in E$ and the claim follows

To prove that Property~\ref{pro:F-approx}(3) implies Property~\ref{pro:EC-approx}(3), we argue by definition that the maturity of an $f'$-matching blossom implies the maturity of the corresponding $f$-edge-cover blossom. Equality is then implied by Lemma~\ref{lem:mature-factor} and Lemma~\ref{lem:mature-cover}. Indeed, by how we define our $f'$-matching $F$ and $f$-edge cover $C$, a vertex's surplus with respect to $C$ and $f$ is equal to a vertex's deficiency with respect to $F$ and $f'$. Moreover, the blossom is heavy/light for $f'$-matching iff it is light/heavy for the corresponding $f$-edge cover. Since the base edge is defined to be the same, maturity of one blossom implies the other. This completes the proof.
\end{proof}
\section{Approximation Algorithms for $f$-Matching and $f$-Edge Cover} \label{sec:approx}

In this section, we prove the main result by giving an approximation algorithm for computing $(1-\epsilon)$-approximate maximum weight $f$-matching. The crux of the result is an implementation of Edmonds' search with relaxed complementary slackness as the eligibility criterion. The notion of approximate complementary slackness was introduced by Gabow and Tarjan for both bipartite matching \cite{GT89} and general matching \cite{GT91}. Gabow gave an implementation of Edmonds' search with \emph{exact} complementary slackness for the $f$-matching problem  \cite{Gab18}, which finds augmenting walks one at a time. The main contribution of this section is to adapt  \cite{Gab18} to approximate complementary slackness to facilitate finding augmenting walks in batches.

To illustrate how this works, we will first give an approximation algorithm for $f$-matching in graphs with small edge weights. Let $w(\cdot)$ be a positive weight function $w: E \rightarrow \{0, \ldots, W\}$.  The algorithm computes a $(1-\epsilon)$-approximate maximum weight $f$-matching in $O(m\alpha(m, n)W\epsilon^{-1})$ time, independent of $f$. We also show how to use scaling techniques to transform this algorithm to run in $O(m\alpha(m, n)\epsilon^{-1} \log \epsilon^{-1})$ time, independent of $W$.

\subsection{Approximation for Small Weights} \label{sec:small-weight}

The main procedure in our $O(m\alpha(m, n)W\epsilon^{-1})$ time algorithm is a variation on Edmonds' search. In one iteration, Edmonds' search finds a set of augmenting walks using \emph{eligible edges}, creates and dissolves blossoms, and performs \emph{dual adjustments} on $y$ and $z$ while maintaining the following Invariant:
\begin{invariant}[Approximate Complementary Slackness] \label{inv:F-approx}
Let $\delta > 0$ be some parameter such that $w(e)$ is a multiple of $\delta$, for all $e \in E$:
\begin{enumerate}[itemsep=0ex]
  \item {\it Granularity.} $y$-values are multiples of $\delta/2$ and $z$-values are multiples of $\delta$.
  \item {\it Approximate Domination.} For each unmatched edge and each blossom edge $e \in (E \setminus F) \cup (\bigcup_{B \in \Omega} E_B)$, $yz(e) \geq w(e) - \delta$.
  \item {\it Approximate Tightness.} For each matched and each blossom edge $e \in F \cup (\bigcup_{B \in \Omega} E_B) $, $yz(e) \leq w(e)$.

  \item {\it Blossom Maturity.} For each blossom $B \in \Omega$, $|F \cap (\gamma(B) \cup I(B))| = \left\lfloor \frac{f(B) + |I(B)|}{2}\right\rfloor$. Root blossoms in $\Omega$ have positive $z$-values.
  \item {\it Unsaturated Vertices.} All unsaturated vertices have the same $y$-value; their $y$-values are strictly less than the $y$-values of other vertices.
\end{enumerate}
\end{invariant}

Notice that here we relax Property~\ref{pro:F-approx}(4) to allow unsaturated vertices to have positive $y$-values. The purpose of Edmonds' search is to decrease the $y$-values for all unsaturated vertices while maintaining Invariant~\ref{inv:F-approx}. Following \cite{GT91,DP14,DPS18}, we define the following eligibility criterion:

\begin{criterion} \label{cri:approx}
An edge $(u, v)$ is \emph{eligible} if it satisfies one of the following:
\begin{enumerate}[itemsep=0ex]
  \item $e \in E_B$ for some $B \in \Omega$.
  \item $e \not\in F$ and $yz(e) = w(e) - \delta$.
  \item $e \in F$ and $yz(e) = w(e)$.
\end{enumerate}
\end{criterion}

A key property of this definition is that it is asymmetric for matched and unmatched edges that are not in any blossom. As a result, if we augment along an eligible augmenting walk $P$, all edges in $P$, except for those in contracted blossoms, will become ineligible, 
and $P$'s image in the contracted graph will become entirely ineligible.

We define $\Gelig$ to be the graph obtained from $G$ by discarding all ineligible edges, and let $\widehat{\Gelig} = \Gelig / \Omega$ be obtained from $\Gelig$ by contracting all blossoms in $\Omega$. For initialization, we set $F = \emptyset$, $y = W/2$, $z = 0$, $\Omega = \emptyset$. Edmonds' search repeatedly executes the following steps: \emph{Augmentation}, \emph{Blossom Formation}, \emph{Dual Adjustment}, and \emph{Blossom Dissolution} until all unsaturated vertices 
have $y$-values equal to zero.  See Figure~\ref{fig:approx-f-factor-noscaling}.

\begin{figure}
\begin{framed}
\begin{enumerate}[itemsep=0pt]
  \item {\it Augmentation.} Find a set
  of edge-disjoint augmenting walks $\widehat{\Psi}$ and a set of alternating cycles $\widehat{\mathcal{C}}$ in $\widehat{\Gelig}$, such that after removing their edges from $\widehat{\Gelig}$, $\widehat{\Gelig}$ does not contain any augmenting walk.
    Let $\Psi$ and $\mathcal{C}$ be the preimages of $\widehat{\Psi}$ and $\widehat{\mathcal{C}}$ in $\Gelig$.
    Update $\displaystyle F \leftarrow F \oplus \left(\left(\bigcup_{P \in \Psi}P\right) \cup \left(\bigcup_{C \in \mathcal{C}}C\right)\right)$.
    After this step, the new $\widehat{\Gelig}$ contains no
    augmenting walk.
  \item {\it Blossom Formation.} Find a maximal set $\Omega'$ of nested blossoms reachable from an unsaturated vertex/blossom in $\widehat{\Gelig}$.
  Update $\Omega \leftarrow \Omega \cup \Omega'$ and then update
  $\widehat{\Gelig}$ to be $G/\Omega$.
  After this step, $\widehat{\Gelig}$ contains no blossom
  reachable from an unsaturated vertex/blossom.
  \item {\it Dual Adjustment.} Let $\widehat{S}$ be the set of vertices from $\widehat{\Gelig}$ reachable from an unsaturated vertex via an eligible alternating walk. We classify vertices in $\widehat{S}$ into $\widehat{V_{in}}$, the set of inner vertices and $\widehat{V_{out}}$, the set of outer vertices.\footnotemark {} Let $V_{in}$ and $V_{out}$ be the set of original vertices in $V$ represented by $\widehat{V_{in}}$ and $\widehat{V_{out}}$. Adjust the $y$ and $z$ values as follows:
      \begin{align*}
        y(v) &\leftarrow y(v) - \delta/2, \mbox{ if $v \in V_{out}$} \\
        y(v) &\leftarrow y(v) + \delta/2, \mbox{ if $v \in V_{in}$} \\
        z(B) &\leftarrow z(B) + \delta,   \mbox{ if $B$ is a root blossom in $\widehat{V}_{out}$} \\
        z(B) &\leftarrow z(B) - \delta,   \mbox{ if $B$ is a root blossom in $\widehat{V}_{in}$}
    \end{align*}
  Here a root blossom is an inclusionwise maximal blossom in $\Omega$.
  \item {\it Blossom Dissolution.} After Dual Adjustment some root blossoms in $\Omega$ might have $0$ $z$-values. Remove them from $\Omega$ until none exists. Update $\Omega$ and $\widehat{\Gelig}$.
\end{enumerate}
\end{framed}
\caption{\label{fig:approx-f-factor-noscaling} A $(1-\epsilon)$-approximate $f$-matching algorithm for small integer weights.}
\end{figure}

\footnotetext{In an actual implementation, the inner/outer labelling can be computed in the search in Blossom Formation step. The labelling continues to be valid after contracting a maximal set of blossoms.}

Now we define what we mean by \emph{reachable} vertices in Steps 1--3 of the algorithm, as well as the inner/outer labelling of nontrivial blossoms and singletons. This is analogous to the reachable/inner/outer vertices in Edmonds' Search for ordinary matching~\cite{DP14,DPS18}, except that we cannot simply treat a contracted blossom like a single vertex. The corresponding definition for $f$-matching is given in Gabow~\cite[p. 46]{Gab18}. For completeness, we restate these definitions and further supplement them with the notion of \emph{alternation}, which provides further insights for reachability.

We start by defining \emph{alternation} which follows from Definition~\ref{def:AP-f-factor} of an augmenting walk.
We say two distinct edges $e,e'$ incident to a blossom/singleton $B$ \emph{alternate} if either $B$ is a singleton and $e$ and $e'$ are of different types,
or $B$ is a nontrivial blossom and $|\eta(B) \cap \{e, e'\}| = 1$.
An \emph{alternating walk} in the contracted graph is a walk where every two consecutive edges alternates.
An augmenting walk is an alternating walk with its terminal edges and terminal vertices satisfying the requirement specified in Definition~\ref{def:AP-f-factor}.

$\widehat{S}$ is the set of blossoms and vertices in $\widehat{\Gelig}$ that are reachable from an unsaturated singleton or an unsaturated light blossom  via an eligible alternating walk. It can be obtained by inductively constructing an alternating search tree rooted at an unsaturated singleton or an unsaturated light blossom.
We label the root nodes \emph{outer}. For a nonroot vertex $v$ in $\widehat{S}$ , let $\tau(v)$ be the edge in $\widehat{S}$ pointing to the parent of $v$. The inner/outer status of $v$ is defined as follows:

\begin{definition}\cite[p. 46]{Gab18} \label{def:in-out}
A vertex $v$ is \emph{outer} if one of the following is satisfied:
\begin{enumerate}[itemsep=0ex]
  \item $v$ is the root of a search tree.
  \item $v$ is a singleton and $\tau(v) \in F$.
  \item $v$ is a nontrivial blossom and $\{\tau(v)\} = \eta(v)$.
\end{enumerate}

Otherwise, one the the following holds and $v$ is classified as \emph{inner}:
\begin{enumerate}[itemsep=0ex]
  \item $v$ is a singleton and $\tau(v) \in E\backslash F$.
  \item $v$ is a nontrivial blossom and $\{\tau(v)\} \neq \eta(v)$.
\end{enumerate}
\end{definition}

An individual search tree in $\widehat{S}$, call it $\widehat{T}$, can be grown by repeatedly attaching a child $v$ to its parent $u$ using an edge $(u, v)$ that is \emph{eligible for} $u$ in $\widehat{S}$; 
see Gabow \cite[p. 46]{Gab18}. Let $B_u$ denote the root blossom in $\Omega$ containing $u$. We say an edge $(u, v) \in E$ is \emph{eligible for} $u$ if it is eligible according to Criterion~\ref{cri:approx} and one of the following is satisfied:
\begin{enumerate}[itemsep=0ex]
  \item $u$ is an outer singleton and $e \not\in F$.
  \item $B_u$ is an outer blossom and $\{e\} \neq \eta(B_u)$.
  \item $u$ is an inner singleton and $e \in F$.
  \item $B_u$ is an inner blossom and $\{e\} = \eta(B_u)$.
\end{enumerate}

Hence, $\widehat{S}$ consists of singletons and blossoms that are reachable from an unsaturated singleton or light blossom, via an eligible alternating path. We call such blossoms and singletons \emph{reachable}, and all other singletons and blossoms \emph{unreachable}. A vertex $v$ from the original graph $\Gelig$ is reachable (unreachable) if $B_v$ is reachable (unreachable) in $\widehat{\Gelig}$.\footnote{Of course, if $B_v$ is inner and reachable in
$\widehat{G}$, this only implies that $\beta(B_v)$
is reachable from an unsaturated vertex in $G$;
other vertices in $B_v$ may not be reachable
in $G$.}

\begin{figure}
  \centering
  \includegraphics[scale=1.5]{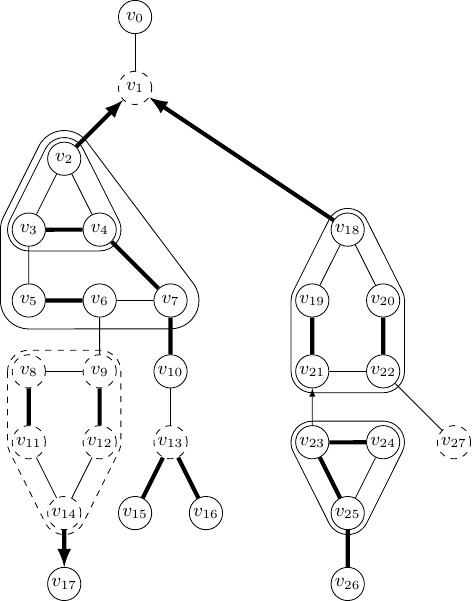}
  \caption{An example of an eligible alternating search tree. Outer blossoms and singletons are labeled using solid boundaries while inner blossoms and singletons have dashed boundaries. }\label{fig:edmonds-2}
\end{figure}

In Edmonds' Search, primal and dual variables are initialized in a way that Property~\ref{pro:F-approx}(1) (Approximate Domination) is always satisfied, and Property~\ref{pro:F-approx} (Approximate Tightness) is vacuous (as the $f$-matching is initially empty) but Property~\ref{pro:F-approx}(4) (Unsaturated Vertices) is not. For this reason, there is a large gap between primal and dual objective, besides the error introduced by \emph{approximate} tightness and domination, at the beginning of the algorithm. This gap is given by the following, assuming exact tightness and domination is satisfied (i.e. $\delta_1 = \delta_2 = 0$ in Property~\ref{pro:F-approx}):

\begin{align*}
  yz(V) - w(F) &= \sum_{v \in V} f(v) y(v) + \sum_{B \in \Omega} \left\lfloor \frac{f(B) + |I(B)|}{2}\right\rfloor z(B) + \sum_{e \in E} u(e) - \sum_{e \in F} w(e) \\
               &= \sum_{v \in V} \defi(v) y(v).
\end{align*}

The goal of the algorithm can be seen as bridging the gap between the primal objective and dual objective while preserving all other complementary slackness properties. It can be achieved in two ways. Augmentations enlarge the $f$-matching by augmenting $F$ along some augmenting walk $P$. This will reduce the total deficiency on the vertex set $V$. Dual Adjustments change the dual variables in a way that decreases the $y$-values on unsaturated vertices while maintaining other approximate complementary slackness conditions. In this algorithm, the progress of Edmonds' Search is measured by the latter, i.e., the overall reduction in $y$-values of unsaturated vertices.

The correctness of our algorithm reduces to showing that \emph{Augmentation}, \emph{Blossom Formation}, \emph{Blossom Dissolution},
and \emph{Dual Adjustment} all preserve Invariant~\ref{inv:F-approx}.

\begin{lemma} \label{lem:ag-bf-bd}
The Augmentation step, Blossom Formation step, and Blossom Dissolution step 
preserve Invariant~\ref{inv:F-approx}.
\end{lemma}

\begin{proof}
We first show that the identity of $I(B)$ is invariant under an augmentation; in particular, augmenting along an augmenting walk that intersects $B$ does not change $I(B)$.
As a result, the function $yz(\cdot)$ is invariant under augmentation. This is a restatement of Lemma 5.3 in \cite{Gab18}, for completeness, we restate the proof.

We use $I(B), \eta(B)$ and $I'(B), \eta'(B)$ to denote the $I$-set and base edge of $B$ before and after the augmentation. By Definition~\ref{def:AP-f-factor} (augmenting walks), if $P$ intersects $B$, then
\begin{equation*}
   \delta_P(B) = \eta(B) \cup \eta'(B) = \eta(B) \oplus \eta'(B).
\end{equation*}
Let $F$ and $F'$ be the $f$-matching before and after augmentation.  We have
\begin{equation*}
  \delta_{F'}(B) = \delta_F(B) \oplus \delta_P(B)
\end{equation*}
Combining both equations, we have
\begin{equation*}
  \delta_{F'}(B) = \delta_F(B) \oplus (\eta(B) \oplus \eta'(B))
\end{equation*}
Hence
\begin{equation*}
  I'(B) = \delta_{F'}(B) \oplus \eta'(B) = \delta_F(B) \oplus \eta(B) = I(B).
\end{equation*}

By Invariant~\ref{inv:F-approx}, any blossom edge $e \in \bigcup_{B \in \Omega} E_B$ satisfies both approximate domination as well as approximate tightness, so it continues to satisfy these Invariants after augmentation. For any eligible edge not in $E_B$ for any $B \in \Omega$, by Criterion~\ref{cri:approx}, if $e$ is matched,  $yz(e) = w(e) - \delta$, thus after the Augmentation step its duals satisfy approximate domination. If $e$ is unmatched, $yz(e) = w(e)$, so its duals satisfy approximate tightness after the Augmentation step.

Augmentation also preserves the maturity of blossoms. For any vertex $v$ in a nonterminal blossom $B$, $\deg_F(v) = \deg_{F'}(v) = f(v)$, so maturity is naturally preserved. If $B$ is a terminal blossom, we have $\deg_F(v) = f(v) - 1$ for $v = \beta(B)$ and $\deg_F(v) = f(v)$ for all $v \neq \beta(B)$. Moreover, after Augmentation $B$ always has a base edge $\eta(B) = \delta_P(B)$. 
Therefore, $B$ is also mature after Augmentation.

All the blossoms found in Blossom Formation must be mature and have $z$-values equal to zero, 
so the value of the $yz$ function is unchanged and all the invariants are preserved.

For Blossom Dissolution step, discarding blossoms with zero $z$-values preserves the value of the $yz$ function and hence preserves the invariants.
\end{proof}

The crux of the proof is to show that Dual Adjustment also preserves Invariant~\ref{inv:F-approx}, in particular approximate domination and approximate tightness. Before proving the correctness of Dual Adjustment, we first prove the following parity lemma, which was first used in \cite{GT91}; we generalize it to $f$-matching:

\begin{lemma}[Parity] \label{lem:parity}
Let $\widehat{S}$ be the search forest defined as above. Let $S$ be the preimage of $\widehat{S}$ in $G$. The $y$-value of every vertex in $S$ has the same parity, as a multiple of $\delta/2$.
\end{lemma}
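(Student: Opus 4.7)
The plan is to prove the parity invariant by structural induction on the search forest $\widehat{S}$, combined with a sub-invariant on the internal structure of each root blossom.

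First I would observe that the roots of $\widehat{S}$ are unsaturated singletons or vertices inside unsaturated light root blossoms; by Invariant~\ref{inv:F-approx}(5) all unsaturated vertices share a single $y$-value, so all root-vertices have the same parity as a multiple of $\delta/2$. Next, consider an eligible edge $e=(u,v)$ in the search tree joining two distinct nodes of $\widehat{S}$ (singletons or root blossoms $B_u \ne B_v$). Laminarity of $\Omega$ rules out $e$ lying inside any $B\in\Omega$, so its eligibility is witnessed by Criterion~\ref{cri:approx}(2) or (3); in either case $yz(e)\in\{w(e),\,w(e)-\delta\}$ is a multiple of $\delta$, and since every $z(B)$ is a multiple of $\delta$ by Invariant~\ref{inv:F-approx}(1), we get $y(u)+y(v)\equiv 0 \pmod{\delta}$. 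As both $y$-values are themselves multiples of $\delta/2$, their $(\delta/2)$-parities must agree.

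The next step is to show that all vertices inside a single root blossom $B\in\Omega$ share a common $(\delta/2)$-parity. I would prove this by induction on blossom formation: a singleton is trivial, and when a nontrivial $B$ is born from subblossoms $B_0,\ldots,B_{l-1}$ glued by cycle edges $e_0,\ldots,e_{l-1}$, each $e_i$ is eligible at the moment of formation and connects subblossoms that at that moment are distinct root blossoms in the search forest, so the argument of the previous paragraph gives matching parity on each side of $e_i$; combined with the inductive hypothesis that each $B_i$ has uniform internal parity, this yields uniform parity throughout $B$. I would then verify that the three operations performed afterwards preserve this uniformity: Augmentation and Blossom Dissolution do not alter $y$-values, while a Dual Adjustment shifts every vertex in a given root blossom by the \emph{same} $\pm\delta/2$ (since the inner/outer label is attached to root blossoms, not to individual interior vertices), so parities inside the blossom flip uniformly and remain equal.

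Combining these two facts, the overall proof is an induction on depth in $\widehat{S}$: roots share a parity by the first step, moving across an eligible tree edge between root blossoms or singletons preserves parity by the second step, and passing between two vertices of the same root blossom preserves parity by the third step. Hence every vertex of $S$ has the same $(\delta/2)$-parity as the unsaturated vertices. The main obstacle is the third step: one must verify carefully that dual adjustments treat each root blossom as a rigid unit (so that internal parity is never disturbed), and that when a new blossom is formed the inductive hypothesis on the subblossoms applies at the exact moment of formation, before any further $y$-value updates can break the alignment between the subblossoms being merged.
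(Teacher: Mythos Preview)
Your proposal is correct and follows essentially the same approach as the paper's proof: both reduce the claim to showing (i) an eligible edge between distinct root blossoms forces matching $(\delta/2)$-parity on its endpoints (via Criterion~\ref{cri:approx}(2,3) and the $\delta$-granularity of $z$), and (ii) every blossom in $\Omega$ has uniform internal parity, proved by induction on blossom formation and preserved by Dual Adjustment. One small wrinkle: your first paragraph's conclusion that ``all root-vertices have the same parity'' via Invariant~\ref{inv:F-approx}(5) is premature, since in an unsaturated light root blossom only the \emph{base} vertex is unsaturated---but your third step (uniform parity inside blossoms) fills exactly this gap, so the overall argument is sound.
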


\begin{proof}
The claim clearly holds after initialization as all vertices have the same $y$-values.
Now notice that every \emph{eligible edges} $e = (u, v)$ that straddles two distinct singletons or nontrivial blossom must have its $yz$-value being $w(e)$ or $w(e) - \delta$. Since $w(e)$ is by assumption an integral multiple of $\delta$, $yz(e)$ is also a multiple of $\delta$. Because $z$-values are always multiples of $\delta$, $y(u)$ and $y(v)$ must both be odd or even as a multiple of $\delta/2$.

Therefore it suffices to show that every vertex in a blossom $B \in \Omega$ has the same parity.

To prove this, we only need to show that the Blossom Formation step only groups vertices with the same parity together. This is because new blossoms $B$ are formed when we encounter a cycle of nontrivial blossoms and singletons $C_B = \left<B_0, e_0, B_1, e_1, \ldots, B_{l-1}, e_{l-1}\right>$ whose edges are eligible. Therefore the endpoints of those edges share the same parity. Hence by induction, all vertices in $B$ also share the same parity. The Dual Adjustment step also preserves this property as vertices in a blossom will have the same inner/outer classification and thus have their $y$-values all incremented or decremented by $\delta/2$.
\end{proof}

The following theorem is a generalization of Lemma~5.8 in \cite{Gab18} to approximate complementary slackness. The proof follows from the same framework but has a slightly more complicated case analysis.

\begin{lemma}\label{lem:dual-adjustment}
Dual Adjustment and Blossom Dissolution preserve Invariant~\ref{inv:F-approx}.
\end{lemma}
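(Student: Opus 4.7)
The plan is to verify each of the five clauses of Invariant~\ref{inv:F-approx} in turn. Granularity (1) is immediate: all adjustments are $\pm\delta/2$ for $y$ and $\pm\delta$ for $z$. Blossom Maturity (4) holds automatically because Dual Adjustment alters neither $F$ nor $\Omega$, and the $I(B)$-sets depend only on $F$ and $\eta(B)$. For Unsaturated Vertices (5), I would observe that every unsaturated singleton and every light blossom with an unsaturated base is a root of the search forest, hence outer; so all unsaturated $y$-values lose $\delta/2$ simultaneously and remain equal. The strict gap to the other $y$-values then survives because the Parity Lemma~\ref{lem:parity} forces the pre-adjustment gap to be at least $\delta$ for saturated vertices inside $\widehat S$, and saturated vertices outside $\widehat S$ do not move at all while unsaturated $y$-values decrease, widening the gap.

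The substance is Approximate Domination (2) and Tightness (3). I would perform a case analysis on an arbitrary edge $e=(u,v)$ based on the inner/outer/unreachable status of its root blossoms $B_u, B_v$. Writing $\sigma(B)\in\{+1,0,-1\}$ for outer/unreachable/inner, only nontrivial root blossoms in $\widehat S$ see a nonzero $\Delta z$, so
\[
\Delta yz(e) \;=\; -\tfrac{\delta}{2}\bigl(\sigma(B_u)+\sigma(B_v)\bigr) \;+\; \delta\!\!\sum_{\substack{B\in\{B_u,B_v\}\cap\Omega\\ e\in\gamma(B)\cup I(B)}}\!\!\sigma(B).
\]
The cases $B_u = B_v$ (so $e\in\gamma(B_u)$, the three contributions cancel to zero, handling all blossom edges in one stroke) and both endpoints unreachable (nothing changes) are disposed of immediately. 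In the remaining cases a direct computation gives $\Delta yz(e)\in\{-\delta,-\tfrac{\delta}{2},0,\tfrac{\delta}{2},\delta\}$.

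The main obstacle is ruling out the ``dangerous'' outcomes, namely $\Delta yz(e)<0$ when $e$ is an eligible unmatched edge (which would break Approximate Domination) and $\Delta yz(e)>0$ when $e$ is an eligible matched edge (which would break Tightness). I would argue that in each such configuration the signs force $e$ to be eligible for extending $\widehat S$ from each of its reachable endpoints, in the sense of the four ``eligible for $u$'' rules of Definition~\ref{def:in-out}. The key bookkeeping insight is that the sign of the nontrivial-blossom contribution flips exactly when $e = \eta(B)$, mirroring the inner-blossom-along-$\eta$ vs.\ outer-blossom-off-$\eta$ asymmetry of that definition. Once the dangerous configurations are identified as simultaneous eligible extensions, $e$ either closes an alternating cycle inside a single tree of $\widehat S$ --- yielding a new mature reachable blossom by Definition~\ref{def:blossom} --- or joins two distinct trees into a walk meeting the terminal, terminal-edge, and alternation requirements of Definition~\ref{def:AP-f-factor}, hence an augmenting walk. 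Both outcomes contradict the maximality guaranteed by the Augmentation and Blossom Formation step. The subcase where one endpoint is unreachable (giving a $\delta/2$ shift) is handled symmetrically: eligibility at the reachable endpoint would immediately extend $\widehat S$ across $e$, again contradicting maximality. All remaining configurations either leave $yz(e)$ unchanged, shift it in a direction harmless to the relevant invariant, or have enough slack prior to the adjustment (owing to ineligibility of $e$) to absorb the shift.
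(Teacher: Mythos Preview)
Your overall approach matches the paper's: a case analysis on the inner/outer/unreachable status of the two root blossoms $B_u,B_v$, reducing the dangerous configurations to ``$e$ is eligible for both endpoints,'' which contradicts maximality of the Augmentation and Blossom Formation step. Your formula for $\Delta yz(e)$ and the observation that $B_u=B_v$ makes the three contributions cancel are a clean way to dispose of blossom edges; the paper does this more informally.

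There is, however, a genuine gap in the final sentence. You claim that the non-dangerous configurations ``have enough slack prior to the adjustment (owing to ineligibility of $e$) to absorb the shift.'' Ineligibility together with Granularity only gives $\delta/2$ of slack: for an unmatched ineligible edge you get $yz(e)>w(e)-\delta$, hence $yz(e)\ge w(e)-\delta/2$. That absorbs a $-\delta/2$ shift (one endpoint reachable) but \emph{not} a $-\delta$ shift, which occurs precisely when both endpoints lie in $\widehat S$ with $\Delta(u)=\Delta(v)=-\delta/2$. This is the paper's Case~8 (and symmetrically Case~7 for matched edges), and it is exactly where Lemma~\ref{lem:parity} is indispensable: since both endpoints lie in $\widehat S$, their $y$-values share parity as multiples of $\delta/2$, so $y(u)+y(v)$ and hence $yz(e)$ is an integer multiple of $\delta$; combined with ineligibility this upgrades the slack to a full $\delta$. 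You invoke the Parity Lemma only for clause~(5), where it is actually unnecessary (the strict gap survives because saturated outer vertices drop by the same $\delta/2$ as unsaturated ones, and all other saturated $y$-values do not decrease). You need to move that invocation to clauses~(2)--(3), for the both-endpoints-reachable ineligible case, to close the argument.
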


\begin{proof}
We focus on part 2 (Approximate Domination) and part 3 (Tightness) of Property~\ref{pro:F-approx}. Part 1 (Granularity) is naturally preserved since we are adjusting $y$-values by $\delta/2$ and $z$-values by $\delta$. Part 5 (Unsaturated vertices duals) is also preserved because unsaturated vertices are labelled as outer and their dual is adjusted by the same amount. Maturity of blossoms is not affected by Dual Adjustment. Although after dual adjustment, some (inner) root blossoms might have $0$ $z$-values, such blossoms are removed in Blossom Dissolution step so part 4 for Invariant~\ref{inv:F-approx} is restored at the end of the iteration.

Similar to ordinary matching, preservation of approximate domination and tightness can be argued using a case analysis on vertices' and blossoms' duals. 
Notice that there are more cases to consider in $f$-matching compared to ordinary matching. Different cases can be generated for an edge $(u,v)$ by considering the inner/outer classification of both endpoints, whether $(u, v)$ is matched, whether $(u, v)$ is the base edge for its respective endpoints, if they are in blossoms, and  whether $(u, v)$ is eligible.
In the following analysis, we follow the framework in Lemma 5.8 \cite{Gab18} to narrow down the number of meaningful cases to just $8$. Notice that Lemma 5.8 \cite{Gab18} can be seen as a version of this lemma for exact complementary slackness. Although one can expect the same conclusion to hold, the proof still differs in the details.

We consider an edge $e = (u, v)$. If $u$ and $v$ are both unreachable, or both in the same root blossom, $yz(u, v)$ clearly remains unchanged after Dual Adjustment.

Therefore we can assume $B_u \neq B_v$ and at least one of them, say $B_u$, is reachable. Every reachable endpoint will contribute a change of $\pm \delta/2$ to $yz(u, v)$. This is the adjustment of $y(u)$, plus the adjustment of $z(B_u)$ if $e \in I(B_u)$. Define $\Delta(u)$ to be the net change of the quantity $y(u) + \sum_{e \in I(B_u)} z(B_u)$.
By definition of Dual Adjustment, we have the following scenarios:
\begin{itemize}[itemsep=0pt]
  \item $\Delta(u) = +\delta/2$: This occurs if $u$ is an inner singleton, or $B_u$ is an outer blossom with $e \in I(B_u)$, or an inner blossom with $e \not\in I(B_u)$.
  \item $\Delta(u) = -\delta/2$: This occurs if $u$ is an outer singleton, or $B_u$ is an inner blossom with $e \in I(B_u)$, or an outer blossom with $e \not\in I(B_u)$.
\end{itemize}

Then we consider the effect of a Dual Adjustment on the edge $e = (u, v)$. First we consider the case when exactly one of $B_u$ and $B_v$, say $B_u$, is in $\widehat{S}$. In this case only $u$ will introduce a change on $yz(u, v)$:

\case{1} $u$ is an inner singleton: Here $\Delta(u) = +\delta/2$. In this case approximate domination is preserved, so we only need to worry about approximate tightness and hence assume $e \in F$. Since $B_v$ is not in $\widehat{S}$, $e$ cannot be eligible for $B_u$ or $B_v$ would have been included in $\widehat{S}$ as a child of $B_u$. Because $e \in F$, $e$ cannot be eligible. Hence $yz(e)<w(e)$. By Granularity, $yz(e) \leq w(e) - \delta/2$. Therefore we have $yz(e) \leq w(e)$ after the Dual Adjustment.

\case{2} $u$ is an outer singleton: Here $\Delta(u) = -\delta/2$. In this case tightness is preserved and we only need to worry about approximate dominination when $e \not\in F$. Similar to Case 1, $e$ must be ineligible and $yz(e) \geq w(e) - \delta/2$. After Dual Adjustment we have $yz(e) \geq w(e) - \delta$.

\case{3} $B_u$ is an inner blossom: We divide the cases according to whether $e$ is matched or not.

\subcase{3.1} $e \in F$. If $e \not\in  \eta(B_u)$, then $e \in I(B_u)$ and $\Delta(u) = -\delta/2$. In this case tightness is preserved. If $e \in \eta(B_u)$, then $e \not\in I(B_u)$ and $\Delta(u) = +\delta/2$. But $e$ cannot be eligible since otherwise $B_v$ would be in the search tree, so we have $yz(e) \leq w(e) - \delta/2$ and $yz(e) \leq w(e)$ after Dual Adjustment.

\subcase{3.2} $e \not\in F$. This is basically symmetric to Subcase 3.1. If $e \in \eta(B_u)$, then $e \in I(B_u)$ and $\Delta(u) = -\delta/2$. But $e$ cannot be eligible therefore $yz(e) \geq w(e) - \delta/2$, and $yz(e) \geq w(e) - \delta$ after Dual Adjustment. If $e \not\in \eta(B_u)$, then $e \not \in I(B_u)$ and $\Delta(u) = +\delta/2$, so approximate Domination is preserved.

\case{4} $B_u$ is an outer blossom:

\subcase{4.1} $e \in F$. If $e \in \eta(B_u)$, then $B_v$ must be  the parent of $B_u$ in the search tree, contradicting the fact that $B_v \not\in \widehat{S}$. Thus $e \not\in \eta(B_u)$, so $e \in I(B_u)$ and $\Delta(u) = +\delta/2$. Since $B_v$ is not reachable, $e$ cannot be eligible, so $yz(u, v) \leq w(e) - \delta/2$ before Dual Adjustment and $yz(u, v) \leq w(e)$ afterward.

\subcase{4.2} $e \not\in F$. Similarly, $e \not\in \eta(B_u)$, so $e \not\in I(B_u)$ and $\Delta(u) = -\delta/2$. Similarly $B_v$ is not reachable so $e$ cannot be eligible. Therefore we have $yz(u, v) \geq w(e) - \delta/2$ and $yz(u, v) \geq w(e) - \delta$ after Dual Adjustment.

\medskip
This completes the case when exactly one of $e$'s endpoints is reachable. The following part will complete the argument for when both endpoints are reachable. We argue that three scenarios can happen: either $\Delta(u)$ and $\Delta(v)$ are of opposite signs and cancel each other out, or $\Delta(u)$ and $\Delta(v)$ are of the same sign and the sign aligns with the property we wish to keep, or if neither case holds, we use Lemma~\ref{lem:parity}~(Parity) to argue that there is enough room for dual adjustment not to violate approximate domination or tightness.

We first examine tree edges in $\widehat{S}$. In this case we assume $B_u$ is the parent of $B_v$ and $e$ is the parent edge of $B_v$. Hence $e$ must be eligible for $B_u$. We argue by the sign of $\Delta(u)$.

\case{5} If $e$ is a tree edge and $\Delta(u) = +\delta/2$:

There are three cases here: $u$ is an inner singleton, $B_u$ is an outer blossom with $e \in I(B_u)$, or $B_u$ is an inner blossom with $e \not\in I(B_u)$. We first observe that in all three cases, $e \in F$. This is straightforward when $u$ is an inner singleton. If $B_u$ is an outer blossom with $e \in I(B_u)$, we know that since $B_u$ is outer, $e \not\in \eta(B_u)$, so therefore $e \in F$. If $B_u$ is an inner singleton with $e \not\in I(B_u)$, since $B_u$ is inner, $e \in \eta(B_u)$, so combined with the fact that $e \not\in I(B_u)$ we have $e \in F$.

Notice that since $B_u$ is the parent of $B_v$, and $e \in F$, $v$ can be an outer singleton, or $B_v$ is an outer blossom with $e \in \eta(B_v)$, or $B_v$ is an inner blossom with $e \not\in \eta(B_v)$. In the second case $e \not\in I(B_v)$ and in the third case $e \in I(B_v)$. In all three cases we have $\Delta(v) = -\delta/2$, and $yz(e)$ remains unchanged.

\case{6} If $e$ is a tree edge and $\Delta(u) = -\delta/2$: Case 6 is symmetric to Case 5. $B_u$ can either be an outer singleton, an inner blossom with $e \in I(B_u)$, or an outer blossom with $e \not\in I(B_u)$. In all cases, the fact that $e$ must be eligible for $B_u$ implies $e \not\in F$, and $B_v$ can only be an inner singleton, an outer blossom with $e \in I(B_u)$, or an inner blossom with $e \not\in I(B_v)$. Hence we have $\Delta(v) = +\delta/2$ so $yz(e)$ remains unchanged.

\medskip
Now suppose $B_u$ and $B_v$ are both in $\widehat{S}$ but $(u, v)$ is not a tree edge. We still break the cases according to the sign of $\Delta(u)$ and $\Delta(v)$. Here we only need to consider when $\Delta(u) = \Delta(v)$, since otherwise they cancel each other and $yz(e)$ remains constant.

\case{7} If $e$ is a tree edge and $\Delta(u) = \Delta(v) = \delta/2$. In this case $yz(e)$ is incremented by $\delta$. Therefore we only need to worry about tightness when $e \in F$. Notice that $B_u$ can only be an inner singleton, an outer blossom with $e \in I(B_u)$ or an inner blossom with $e \not\in I(B_u)$. When $B_u$ is an outer blossom, $e \not\in \eta(B_u)$. When $B_u$ is an inner blossom, since $e \in F$ and $e \not\in I(B_u)$, $e \in \eta(B_u)$. The same holds for the other endpoint $B_v$.

It is easy to verify that in all cases, $e$ is eligible for $B_u$ (or $B_v$) if and only if $e$ is eligible. But notice that after Augmentation and Blossom Formation steps, there is no augmenting walk or reachable blossom in $\widehat{\Gelig}$, i.e., there cannot be an edge $(u, v)$ that is eligible for both endpoints $B_u$ and $B_v$ since otherwise one can find an augmenting walk or a new reachable blossom. Thus $e$ is ineligible and $yz(e) < w(e)$. But by Invariant~\ref{inv:F-approx}(1) (Granularity) and Lemma~\ref{lem:parity} (Parity), both $w(e)$ and $yz(e)$ must be multiples of $\delta$. Therefore we have $yz(e) \leq w(e) - \delta$. This implies $yz(e) \leq w(e)$ after Dual Adjustment.

\case{8} If $e$ is a tree edge and $\Delta(u) = \Delta(v) = -\delta/2$. Here $yz(e)$ is decremented by $\delta$. Similar to the case above, we can assume $e \not\in F$ and only focus on approximate domination. $B_u$ can be an outer singleton, inner blossom with $e \in I(B_u)$, or outer blossom with $e \not\in I(B_u)$. Since $e \not\in F$, $e \in I(B_u)$ if and only if $e \in \eta(B_u)$. Therefore, if $e$ is eligible, $e$ must be eligible for both $B_u$ and $B_v$. But similar to Case 7, $e$ being eligible for both endpoints will lead to the discovery of an additional blossom or augmenting walk in $\Gelig$, which is impossible after Augmentation and Blossom Formation. Therefore we conclude in this case $e$ is ineligible and $yz(e) > w(e) - \delta$. By Lemma~\ref{lem:parity} (Parity), we have $yz(e) \geq w(e)$ before Dual Adjustment, so approximate domination still holds after Dual Adjustment.
\end{proof}

\begin{theorem} ~\label{thm:small-weight}
A $(1-\epsilon)$-approximate $f$-matching can be computed in $O(Wm\alpha(m, n)\epsilon^{-1})$ time.
\end{theorem}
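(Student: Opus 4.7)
The plan is to analyze the algorithm of Figure~\ref{fig:approx-f-factor-noscaling} run with parameter $\delta = \epsilon$, show it halts after $O(W/\epsilon)$ Edmonds'-search iterations each costing $O(m)$ time, and then apply Lemma~\ref{lem:F-approx} at termination to obtain the claimed approximation ratio. Without loss of generality I would take $\epsilon$ to be of the form $1/k$ for a positive integer $k$ (otherwise round $\epsilon$ down to the next such value, losing at most a factor of $2$), so that integer weights in $[1,W]$ are automatically multiples of $\delta$ as required by Invariant~\ref{inv:F-approx}(1).

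First I would verify that the initial configuration ($F = \emptyset$, $y(v) = W/2$, $z \equiv 0$, $\Omega = \emptyset$) satisfies Invariant~\ref{inv:F-approx}: Approximate Domination reads $yz(e) = W \geq w(e) \geq w(e) - \delta$, Tightness and Blossom Maturity are vacuous since $F = \emptyset$ and $\Omega = \emptyset$, Granularity is immediate, and Unsaturated Vertices' Duals holds vacuously since every vertex is unsaturated. The three preceding lemmas (for Augmentation and Blossom Formation, Blossom Dissolution, and Lemma~\ref{lem:dual-adjustment} for Dual Adjustment) already show that every step of the iteration preserves Invariant~\ref{inv:F-approx}, so the invariant holds throughout.

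For correctness at termination, note that when the halting condition is met every unsaturated vertex has $y(v) = 0$, which upgrades Invariant~\ref{inv:F-approx}(5) to Property~\ref{pro:F-approx}(4). The remaining items match Property~\ref{pro:F-approx} with parameters $(\delta_1,\delta_2) = (\epsilon, 0)$: Approximate Domination of the invariant is Property~\ref{pro:F-approx}(1) with $\delta_1 = \delta = \epsilon$, the exact Tightness of the invariant is stronger than Property~\ref{pro:F-approx}(2) with $\delta_2 = 0$, and Blossom Maturity is identical. Applying Lemma~\ref{lem:F-approx} gives $w(F) \geq w(F^*) - \epsilon |F^*|$, and the hypothesis $w(e) \geq 1$ yields $|F^*| \leq w(F^*)$, so $w(F) \geq (1-\epsilon) w(F^*)$ as desired.

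For the running time I would bound the number of Dual Adjustment phases using Invariant~\ref{inv:F-approx}(5): all unsaturated vertices share a common $y$-value $y^\star$ that starts at $W/2$ and strictly decreases by $\delta/2 = \epsilon/2$ during every Dual Adjustment, because the roots of the search forest are unsaturated and hence labelled \emph{outer}. The halting condition $y^\star = 0$ is therefore reached after at most $W/\epsilon$ phases. Each phase consists of (i) computing a maximal family of eligible augmenting walks and reachable mature blossoms in $\widehat{\Gelig}$, (ii) relabelling vertices/blossoms and updating the duals, and (iii) dissolving any root blossoms whose $z$-value has dropped to zero. Steps (ii) and (iii) are clearly $O(m)$ per phase; the main obstacle, and the only piece not supplied by this section, is implementing step (i) in linear time, which is done by the maximal augmenting-walk procedure of Section~\ref{sec:aug-walk} together with the label-reuse observation noted in the footnotes to Figure~\ref{fig:approx-f-factor-noscaling}. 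Multiplying, the total cost is $O(Wm\epsilon^{-1})$.
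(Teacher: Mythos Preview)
Your proposal is correct and follows essentially the same approach as the paper's proof: set $\delta$ to (a rounded version of) $\epsilon$, count $O(W\epsilon^{-1})$ Dual Adjustment phases until unsaturated $y$-values reach $0$, invoke Lemma~\ref{lem:F-approx} with $(\delta_1,\delta_2)=(\delta,0)$ and use $|F^*|\le w(F^*)$ to get the $(1-\epsilon)$ factor, and defer the linear-time per-phase implementation to Section~\ref{sec:aug-walk}. Your write-up is a bit more explicit about the initial configuration and why $y^\star$ drops each phase, but the argument is the same.
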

\begin{proof}
We initialize the $f$-matching to be $\emptyset$ and $y(v) = W/2$ for all $v$. Set $\delta = 1/\left\lceil \epsilon^{-1} \right\rceil \leq \epsilon$. Since each iteration decreases $y$-values by $\delta/2$, $y$-values of unsaturated vertices takes $(W/2)/(\delta/2) = O(W\epsilon^{-1})$ iterations to reach $0$, thereby satisfying Property~\ref{pro:F-approx} with $\delta_1 = \delta, \delta_2 = 0$. By invoking Lemma~\ref{lem:F-approx}, with $F^*$ being the optimum $f$-matching, we have
\begin{equation*}
  w(F) \geq w(F^*) - |F^*| \delta \geq w(F^*) - w(F^*) \delta \geq (1 - \epsilon) w(F^*).
\end{equation*}

For the running time, each iteration of Augmentation, Blossom Formation, Dual Adjustment, and Blossom Dissolution can be implemented in $O(m \alpha(m, n))$ time. 
We defer the detailed implementation to Section~\ref{sec:aug-walk}. There are a total of $W/\delta = O(W\epsilon^{-1})$ iterations, so the running time is $O(Wm\alpha(m, n)\epsilon^{-1})$.
\end{proof}

As a result of Lemma~\ref{lem:reduction} and Lemma~\ref{lem:ec-cs}, we also obtain the following result:

\begin{corollary}~\label{cor:small-weight}
A $(1+\epsilon)$-approximate $f$-edge cover can be computed in $O(Wm\alpha(m, n)\epsilon^{-1})$ time.
\end{corollary}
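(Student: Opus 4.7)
The plan is to reduce the approximate $f$-edge cover problem to the approximate $f'$-factor problem solved by Theorem~\ref{thm:small-weight}, and then chain together Lemma~\ref{lem:reduction} and Lemma~\ref{lem:ec-cs} to transfer the approximation guarantee. Given an $f$-edge cover instance on $G=(V,E,w)$ with $w:E\to\{1,\dots,W\}$, I first set $f'(v) = \deg(v) - f(v)$ (which is nonnegative whenever the $f$-edge cover problem is feasible). Running the algorithm of Theorem~\ref{thm:small-weight} on $(G,w,f')$ with parameter $\delta$ produces an $f'$-factor $F$, a laminar blossom family $\Omega$, and duals $y,z$ that satisfy Property~\ref{pro:F-approx} with $\delta_1' = \delta$ and $\delta_2' = 0$ (the Tightness invariant of Invariant~\ref{inv:F-approx} is exact, with no slack). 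Since $\deg_F(v) \le f'(v) = \deg(v) - f(v)$, the complement $C = E\setminus F$ automatically satisfies $\deg_C(v) \ge f(v)$, so $C$ is a feasible $f$-edge cover.

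Next I apply Lemma~\ref{lem:reduction} to transfer the dual certificate: the same blossoms $\Omega$ and duals $y,z$, paired with $C$, satisfy Property~\ref{pro:EC-approx} with $\delta_1 = \delta_2' = 0$ and $\delta_2 = \delta_1' = \delta$. Lemma~\ref{lem:ec-cs} then yields
\[
w(C) \;\le\; w(C^*) + 0\cdot|C^*| + \delta\,|C| \;=\; w(C^*) + \delta\,|C|,
\]
where $C^*$ is an optimum $f$-edge cover. Because every edge has weight at least $1$, we have $|C| \le w(C)$, so the inequality rearranges to $w(C) \le w(C^*)/(1-\delta)$. Choosing $\delta = \epsilon/(2\lceil \epsilon^{-1}\rceil^{-1}+\ldots)$—more cleanly, $\delta = \min\{\epsilon/2,\,1/\lceil 2\epsilon^{-1}\rceil\}$ so that weights remain multiples of $\delta$—gives $w(C) \le (1+\epsilon)\,w(C^*)$.

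For the running time, the reduction itself (computing $f'$ and forming $C$ from $F$) is linear, and the call to the $f'$-factor algorithm costs $O(Wm\epsilon^{-1})$ by Theorem~\ref{thm:small-weight}, so the overall cost is $O(Wm\epsilon^{-1})$. The main conceptual point — and the step that is not purely mechanical — is the asymmetric use of approximation slack: the $f'$-factor algorithm is designed so that matched edges are \emph{exactly} tight ($\delta_2'=0$) and only unmatched edges carry the $\delta$-slack; Lemma~\ref{lem:reduction} swaps these roles under complementation, placing the slack on matched (i.e., cover) edges, which is exactly what is needed to bound the excess weight of $C$ by a linear function of $|C|$ and hence of $w(C)$. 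No other obstacle arises, because Lemma~\ref{lem:reduction} has already done the structural work of matching up the $I$-sets and blossom maturity constraints between the two dual LPs.
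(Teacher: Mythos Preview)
Your argument is correct and follows exactly the route the paper indicates (the corollary is stated as an immediate consequence of Lemma~\ref{lem:reduction} and Lemma~\ref{lem:ec-cs}, and you have filled in precisely those details). The only cosmetic blemish is the garbled first attempt at choosing $\delta$; your second choice $\delta = 1/\lceil 2\epsilon^{-1}\rceil$ is fine, since it is of the form $1/k$ (so integer weights are multiples of $\delta$), satisfies $\delta \le \epsilon/2$ so that $1/(1-\delta)\le 1+\epsilon$, and keeps the iteration count at $O(W\epsilon^{-1})$.
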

\begin{proof}
Given a weighted graph $G$ and degree constraint function $f$, let $f' = \deg - f$ be the complement of $f$. With some paramter $\delta$ we run the algorithm from Theorem~\ref{thm:small-weight} to find an $f'$-matching $F'$ that satisfies Property~\ref{pro:F-approx} with parameters $(\delta, 0)$. By Lemma~\ref{lem:reduction}, its complement $F = E \setminus F'$ satisfies Property~\ref{pro:EC-approx} with parameters $(0, \delta)$. By Lemma~\ref{lem:ec-cs}, we have
\begin{align*}
w(F^*) \geq w(F) - \delta|F| \geq (1-\delta)w(F)
\end{align*}
Then we can choose $\delta = \Theta(\epsilon)$ to guarantee that $F$ is 
an $(1+\epsilon)$-approximate minimum weight $f$-edge cover.
\end{proof}

Also notice that when $W = O(1)$ is constant, Theorem~\ref{thm:small-weight} and Corollary~\ref{cor:small-weight} are the fastest known approximation algorithms for these problems.

\subsection{A Scaling Algorithm for General Weights}

In this section, we can modify the $O(W m\alpha(m, n)\epsilon^{-1})$ weighted $f$-matching algorithm to work on graphs with general real weights. The modification is based on the scaling framework in \cite{DP14}.
If the weights are arbitrary reals, we can round
them to integers in $[W]$, $W=\mbox{poly}(n)$, with negligible loss in accuracy.  Thus we can assume without loss of generality that all weights are $O(\log n)$-bit integers.
The idea is to divide the algorithm into into $L = \log W + 1$ scales that execute Edmonds' search with exponentially diminishing $\delta$. The goal of each scale is to use $O(\epsilon^{-1})$ Edmonds' searches to halve the $y$-values of all unsaturated vertices while maintaining a more relaxed version of approximate complementary slackness. By manipulating the weight function, approximate domination, which is weak at the beginning, is strengthened over scales, while approximate tightness is weakened in exchange.  Assume without loss of generality that $W > 1$ and $\epsilon < 1$ are powers of two.  We define $\delta_i, 0 \leq i \leq L$ to be the error parameter in scale $i$, 
where $\delta_0 = \epsilon W$ and $\delta_i = \delta_{i - 1} / 2$ for $0 < i \leq L$.
Each scale works with a new weight function $w_i$ which is the old weight function rounded down to the nearest multiple of $\delta_i$, i.e, $w_i(e) = \delta_i \left\lfloor w(e)/\delta_i \right\rfloor$.
In the last scale $W_L = w$. We maintain a scaled version of Invariant~\ref{inv:F-approx}.
Note differences in the Approximate Domination and Approximate Tightness criteria.

\begin{invariant}[Scaled approximate complementary slackness with positive unsaturated vertices] \label{inv:F-scale}
At scale $i = 0, 1, \ldots, L = \log W$, we maintain the $f$-matching $F$, blossoms $\Omega$, and duals $y$, $z$ to satisfy the following invariant:
\begin{enumerate}[itemsep=0pt]
  \item {\it Granularity.} All $y$-values are multiples of $\delta_i/2$, and $z$-values are multiples of $\delta_i$.
  \item {\it Approximate Domination.} For each $e \not\in F$ or $e \in E_B$ for some $B \in \Omega$, $yz(e) \geq w_i(e) - \delta_i$.
  \item {\it Approximate Tightness.} For each
  $e \in F\cup (\bigcup_{B\in\Omega} E_B)$,
  let $j_e \leq i$ be the index of last scale that $e$
  joined the set $F \cup \bigcup_{B\in\Omega} E_B$.
  We have
  $yz(e) \leq w_i(e) + 2 \delta_{j_e} - 2 \delta_i$.
  \item {\it Mature Blossoms.} For each blossom $B \in \Omega$, $|F \cap (\gamma(B) \cup I(B))| = \left\lfloor \frac{f(B)+ |I(B)|}{2} \right\rfloor$.
  \item {\it Unsaturated Vertices' Duals.} The $y$-values of all unsaturated vertices are the same and less than the $y$-values of other vertices.
\end{enumerate}
\end{invariant}

Based on Invariant~\ref{inv:F-scale}, Edmonds' search will use the following Eligibility criterion:
\begin{criterion}\label{cri:scale}
At scale $i$, an edge $e \in E$ is eligible if one of the following holds
\begin{enumerate}[itemsep=0pt]
  \item $e \in E_B$ for some $B \in \Omega$.
  \item $e \not\in F$ and $yz(e) = w_i(e) - \delta_i$.
  \item $e \in F$ and $yz(e) - w_i(e)$ is a nonnegative integer multiple of $\delta_i$.
\end{enumerate}
\end{criterion}

This is similar to Criterion~\ref{cri:approx} except for we have a relaxed criterion for when $e \in F$. This relaxation is due to the fact that tightness is weakened at the end of each scale, and the eligibility criterion is then relaxed to accommodate it. We argue below that this relaxation does not affect the correctness of Edmonds' Search.

Before the start of scale $0$, the algorithm initializes $F, \Omega, y, z$ similar to the algorithm for small edge weights: $y(u) \leftarrow W/2$, $\Omega \leftarrow \emptyset$, $F \leftarrow \emptyset$. At scale $i$, the duals of unsaturated vertices start at $W/2^{i+1}$. We execute $(W/2^{i+2}) / (\delta_i/2) = O(\epsilon^{-1})$ iterations of Edmonds' search with parameter $\delta_i$, using Criterion~\ref{cri:scale} of eligibility. The scale terminates when the $y$-values of unsaturated vertices are reduced to $W/2^{i+2}$, or, in the last iteration, as they reach $0$.

Notice that although the invariant and the eligibility criterion are changed, the fact that Edmonds' search preserves the complementary slackness invariant still holds. The proof of Lemma~\ref{lem:dual-adjustment} goes through, as long as the definition of eligibility guarantees the following parity property:

\begin{lemma} \label{lem:gap}
At any point of scale $i$, let $\overline{S}$ be the set of vertices in $\Gelig$ reachable from an unsaturated vertex using eligible edges. The $y$-value of any vertex $v \in V$ with $B_v \in \overline{S}$ has the same parity as a multiple of $\delta_i/2$.
\end{lemma}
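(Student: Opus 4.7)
The plan is to mirror the proof of Lemma~\ref{lem:parity} almost verbatim, with the only substantive change being to verify that the scaled eligibility Criterion~\ref{cri:scale} still forces $yz(e)$ to be an integer multiple of $\delta_i$ for every eligible edge $e$. I would induct on the sequence of operations performed during scale $i$ and track the invariant that every $y$-value in $\overline{S}$ shares a common parity as a multiple of $\delta_i/2$.

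For the base case, at the start of scale $i > 0$, Invariant~\ref{inv:F-scale}(1) carried over from scale $i-1$ says every $y$-value is a multiple of $\delta_{i-1}/2 = \delta_i$, i.e., an \emph{even} multiple of $\delta_i/2$, so all $y$-values trivially agree in parity; at the start of scale $0$, every $y$-value equals $W/2$ and the claim is immediate. The key eligibility observation is that since $w_i(e) = \delta_i \lfloor w(e)/\delta_i \rfloor$ is itself a multiple of $\delta_i$, Criterion~\ref{cri:scale}(2) ($yz(e) = w_i(e) - \delta_i$) and Criterion~\ref{cri:scale}(3) ($yz(e) - w_i(e) \in \delta_i \Z_{\geq 0}$) both imply $yz(e) \in \delta_i \Z$. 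Combined with Invariant~\ref{inv:F-scale}(1) ($z$-values are multiples of $\delta_i$), this yields $y(u) + y(v) \in \delta_i \Z$ for every eligible edge $(u,v)$, so its endpoints share parity as multiples of $\delta_i/2$. Hence extending $\overline{S}$ along an eligible edge propagates the common parity, and the Blossom Formation step, whose cycle edges are eligible via Criterion~\ref{cri:scale}(2) or (3), only merges vertices of equal parity. A blossom then inherits that parity for its internal vertices, so Criterion~\ref{cri:scale}(1) edges are consistent with the invariant by induction.

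For Dual Adjustment the argument is identical to Lemma~\ref{lem:parity}: every vertex in $\overline{S}$ inherits the inner/outer label of its root blossom and has its $y$-value shifted by exactly $+\delta_i/2$ or $-\delta_i/2$; both shifts flip the parity as a multiple of $\delta_i/2$ by one, so the common parity of $\overline{S}$ is preserved uniformly. Augmentation and Blossom Dissolution do not modify $y$-values, and (as already argued inside the proof of Lemma~\ref{lem:dual-adjustment}) augmentation preserves $I(B)$ for every $B \in \Omega$, so $yz$-values---and hence the set of eligible edges---are unaffected. The only conceptual wrinkle relative to Lemma~\ref{lem:parity} is the more permissive matched-edge criterion, which allows $yz(e) - w_i(e)$ to be any nonnegative multiple of $\delta_i$ rather than exactly zero; once one observes that $w_i(e) \in \delta_i \Z$ by the rounding definition, this extra slack is absorbed and the remainder of the proof is a routine transcription.
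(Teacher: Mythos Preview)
Your proposal is correct and is precisely the argument the paper has in mind: the paper does not write out a separate proof for this lemma, instead stating it as the scaled analogue of Lemma~\ref{lem:parity} and implicitly relying on that proof carrying over. Your one substantive addition---observing that $w_i(e) = \delta_i \lfloor w(e)/\delta_i \rfloor \in \delta_i\Z$, so Criterion~\ref{cri:scale}(2) and (3) still force $yz(e) \in \delta_i\Z$ despite the more permissive matched-edge rule---is exactly the point that needs checking, and you handle it correctly; the rest is indeed a routine transcription of Lemma~\ref{lem:parity}.
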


We omit the proof of Lemma~\ref{lem:gap}. The details are similar to Lemma~\ref{lem:dual-adjustment}, using
Criterion~\ref{cri:scale} in lieu of Criterion~\ref{cri:approx}.

Now we sketch why Criterion~\ref{cri:scale} ensures Invariant~\ref{inv:F-scale}, in particular, how it ensures approximate domination and approximate tightness. We will not prove it formally as the details are very similar to Lemma~\ref{lem:dual-adjustment} and Lemma~\ref{lem:ag-bf-bd}.

Observe that primal and dual variables initially satisfy Invariant~\ref{inv:F-scale}, in particular parts
$2$ and $3$. This is because all edges have $yz$-values equal to $W$, and no edge is in $M \bigcup_{B \in \Omega}$ $E_B$.

Notice that dual adjustment never changes the $yz$-values of edges inside any blossom $B \in \Omega$, while it will have the following effect on edge $e$ if its endpoints
lie in different blossoms.
\begin{enumerate}[itemsep=0ex]
  \item If $e \not\in F$ and is ineligible, $yz(e)$ might decrease but will never drop below the threshold for eligibility, i.e., it will not drop below $w_i(e) - \delta_i$.
  \item If $e \not\in F$ and is eligible, $yz(e)$ will never decrease.
  \item If $e \in F$ and is ineligible, $yz(e)$ might increase but will never exceed the threshold for eligibility, i.e., it will not raise above $w_i(e) + 2\delta_{j_e} - 2\delta_i$.
  \item If $e \in F$ and is eligible, $yz(e)$ will never increase.
\end{enumerate}

In other words, with the proper definition of Eligibility, Dual Adjustment will not destroy approximate domination and approximate tightness. Therefore Edmonds' search within scale $i$ will preserve Invariant~\ref{inv:F-scale}.

We also need to manipulate the duals between different scales to ensure Invariant~\ref{inv:F-scale}. Formally, after completion of scale $i$, we  increment all the $y$-values by $\delta_{i+1}$, i.e.,
if $yz'$ and $yz$ are the function before and after
dual adjustment, $yz(e)=yz'(e)+2\delta_{i+1}$.
No change is made to $F, \Omega$ and $z$. This will ensure both approximate domination and approximate tightness hold at scale $i+1$.
At the previous scale we have approximate domination $yz(e) \geq w_i(e) - \delta_i$.
The weights at scale $i$ and $i+1$ satisfy
$w_{i+1}(e) \leq w_i(e) + \delta_{i+1}$.
Thus, after dual adjustment,
\begin{align*}
    yz(e) &= yz'(e) + 2\delta_{i+1} \\
          &\geq w_i(e) - \delta_i + 2\delta_{i+1} \\
          &\geq w_{i+1}(e) - \delta_{i+1} - \delta_i + 2\delta_{i+1} \\
          &= w_{i+1}(e) - \delta_{i+1}.
\end{align*}
For approximate tightness, we have
\[
yz(e) - w_{i+1}(e)
\leq yz(e) - w_i(e)
= yz'(e) - w_i(e) - 2\delta_{i+1}
\leq 2\delta_{j_e} - 2\delta_i + 2\delta_{i+1}
= 2\delta_{j_e} - 2\delta_{i+1},
\]
since
$\delta_{i+1} = \delta_i/2$.
This step is the reason for the definition of 
Invariant~\ref{inv:F-scale}(3), as approximate tightness is gradually relaxed in this step.
The algorithm terminates when the $y$-values of all unsaturated vertices reach $0$. It terminates with an $f$-matching $F$ and its corresponding duals $y$, $z$ and $\Omega$ satisfying the following property:
\begin{property}[Final Complementary Slackness]~\label{pro:F-scale}
  \begin{enumerate}[itemsep=0pt]
      \item {\it Approximate Domination.} For all $e \not\in F$ or $e \in E_B$ for any $B \in \Omega$, $yz(e) \geq w(e) - \delta_L$.
      \item {\it Approximate Tightness.} For all $e \in F\cup (\bigcup_{B\in\Omega} E_B)$, let $j_e$ be the index of the last scale that $e$ joined
      $F\cup (\bigcup_{B\in\Omega} E_B)$.
      We have $yz(e) \leq w(e) + 2 \delta_{j_e}$.
      \item {\it Blossom Maturity.} For all blossoms $B \in \Omega$, $|F \cap (\gamma(B) \cup I(B))| = \left\lfloor \frac{f(V)+ |I(B)|}{2} \right\rfloor)$.
      \item {\it Unsaturated Vertices' Duals.} The $y$-values of all unsaturated vertices are $0$.
  \end{enumerate}
\end{property}

This implies approximate domination and approximate tightness are satisfied within some factor $1\pm O(\epsilon)$. For approximate domination this is easy to see since $w(e) \geq 1$ and $\delta_L = \epsilon/2$, thus $yz(e) \geq (1 - \epsilon/2) w(e)$ if $e \not\in F$.
For approximate tightness, we can lower bound the weight of $e$ if $e$ last entered $F$
or a blossom at scale $j=j_e$.
Throughout scale $j$, the $y$-values are at least $W/2^{j+2}$, so $w(e) \ge w_j(e) \ge 2(W/2^{j+2}) - \delta_j$.
Since $\delta_j = \epsilon W/2^j$,  $yz(e) \le w(e) + 2\delta_j \leq (1 +4\epsilon) w(e)$ when $e \in F$.

To sum up, the $f$-matching satisfies Property~\ref{pro:F-mult-approx} 
with parameters $(\epsilon/2,4\epsilon)$, and Lemma~\ref{lem:reduction} 
implies the complementary $f'$-edge cover satisfies Property~\ref{pro:EC-mult-approx} with parameters $(4\epsilon,\epsilon/2)$.
Lemmas~\ref{lem:F-mult-approx} and \ref{lem:ec-mult-cs} imply 
the approximation factors for the $f$-matching
and $f'$-edge cover are, respectively, 
$(1-\epsilon/2)(1+4\epsilon)^{-1}$ and
$(1+4\epsilon)(1-\epsilon/2)^{-1}$.

The running time of the algorithm is $O(m\epsilon^{-1} \log W)$ because there are $\log W + 1$ scales, and each scale consists of 
$O(\epsilon^{-1})$ iterations of Edmonds' search, 
which can be implemented in $O(m\alpha(m, n))$ time.

\begin{theorem} ~\label{thm:general-weight}
A $(1 - \epsilon)$-approximate maximum weight $f$-matching can be computed in time 
$O(m\alpha(m, n)\epsilon^{-1}\log W)$.
\end{theorem}

\begin{corollary} ~\label{cor:general-weight}
A $(1 + \epsilon)$-approximate minimum weight $f$-edge cover can be computed in time
$O(m\alpha(m, n)\epsilon^{-1}\log W)$.
\end{corollary}

\subsection{An $O_\epsilon(m\alpha(m, n))$ Algorithm}
We also point out that by applying techniques in \cite[\S 3.2]{DP14}, the algorithm can be modified to run in time independent of $W$.
The main idea is to force the algorithm to ignore an edge $e$ for all but $O(\log \epsilon^{-1})$ scales. First, we index edges by the first scale that it can ever become eligible. Since at scale $i$, $y$-values can drop at most to $W / 2^{i+1}$, any edge with weight below $W / 2^i$ cannot be eligible at scale $i$. Let $\mu_i = W/2^i$ and $\scale(e)$ be the unique $i$ such that
$w(e) \in [\mu_i, \mu_{i-1})$.  Notice that we can ignore $e$ in any scale $j < \scale(e)$. Moreover, we will also forcibly ignore $e$ at scale $j > \scale(e) + \lambda$ where $\lambda = \log \epsilon^{-1} + O(1)$. Ignoring an otherwise eligible edge might cause violations of approximate tightness and approximate domination. However, since the $y$-values of free vertices are $O(\epsilon w(e))$ at this point, this violation
will only amount to $O(\epsilon w(e))$.

To see this, notice that $\mu_i$ is also an upper bound to the amount of change to $yz(e)$ caused by all Dual Adjustment \emph{after} scale $i$. Hence, after scale $\scale(e) + \lambda$, the total amount of violation to approximate tightness and approximate domination on $e$ can be bounded by $\mu_{\scale(e)+\lambda} = O(\epsilon) \mu_{\scale(e)} = O(\epsilon) w(e)$, which guarantees we still get a $(1-O(\epsilon))$-approximate solution.

Therefore, every edge takes part in at most $\log \epsilon^{-1} +O(1)$ scales, with $O(\epsilon^{-1})$ cost per scale. The total running time is $O(m \alpha(m, n)\epsilon^{-1} \log \epsilon^{-1})$.

\begin{theorem} \label{thm:linear}
A $(1 - \epsilon)$-approximate maximum weight $f$-matching and a $(1 + \epsilon)$-approximate minimum weight $f$-edge cover can be computed in $O(m\alpha(m, n)\epsilon^{-1}\log \epsilon^{-1})$ time, independent of the weight function.
\end{theorem}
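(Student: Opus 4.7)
The plan is to refine the $O(m\epsilon^{-1}\log W)$ scaling algorithm of the previous subsection so that each edge participates in only $O(\log\epsilon^{-1})$ scales instead of all $\log W+1$. Since a single scale of Edmonds' search runs in time $O(\epsilon^{-1})$ per participating edge (by Section~\ref{sec:aug-walk}), this immediately gives the time bound, and the matching $f$-edge cover result then follows by feeding the dual certificate into Lemma~\ref{lem:reduction}.

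First I would formalize an ``activity window'' for each edge. Set $\mu_i = W/2^i$ and let $\scale(e)$ be the unique $i$ with $w(e)\in[\mu_i,\mu_{i-1})$, so $w(e) = \Theta(\mu_{\scale(e)})$. The modified algorithm considers $e$ only in scales $j\in[\scale(e),\,\scale(e)+\lambda]$, where $\lambda = \lceil\log_2\epsilon^{-1}\rceil + O(1)$. For the early cutoff $j<\scale(e)$: at these scales the rounded weight $w_j(e)$ is negligible compared to the prevailing $y$-values, so one checks that ignoring $e$ is consistent with a version of Invariant~\ref{inv:F-scale} stated over the already-active subgraph, and that when $e$ enters the active subgraph at scale $\scale(e)$, the current dual values of its endpoints already satisfy Approximate Domination and Tightness on $e$ for $w_{\scale(e)}$ without any repair step.

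The hard part will be the late cutoff at $j>\scale(e)+\lambda$, where $e$ is forcibly dropped and its duals may drift away from the invariants. The key is a ``drift lemma'' showing that the total change to $yz(e)$ caused by all Dual Adjustments after the end of scale $i$ is bounded by $O(\mu_i)$: each subsequent scale $j'\ge i+1$ performs $O(\epsilon^{-1})$ adjustments of size $\delta_{j'}/2$ in $y$ and $\delta_{j'}$ in $z$, totaling $O(\epsilon^{-1})\cdot\delta_{j'} = O(\mu_{j'})$, and $\sum_{j'\ge i+1}\mu_{j'} = O(\mu_i)$. Applied with $i=\scale(e)+\lambda$, this bounds the post-window drift by
\[
O(\mu_{\scale(e)+\lambda}) \;=\; O(2^{-\lambda})\cdot\mu_{\scale(e)} \;=\; O(\epsilon\cdot w(e)),
\]
so the cumulative violation of Approximate Domination or Tightness on $e$ is absorbed into a per-edge slack of $O(\epsilon w(e))$ in Property~\ref{pro:F-approx}. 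Replaying the bookkeeping of Lemma~\ref{lem:F-approx} with this relaxed slack still yields a $(1-O(\epsilon))$-approximate $f$-factor, and rescaling $\epsilon$ by a constant restores the stated factor. I also need to confirm that ignoring $e$ outside its window does not spoil the blossom invariants of Section~\ref{sec:basis}---this follows because $F$, $\Omega$, and each $I(B)$ never reference a dropped edge. Summing $O(\epsilon^{-1})$ work per participating edge over its window of $\lambda+1 = O(\log\epsilon^{-1})$ scales gives the final $O(m\epsilon^{-1}\log\epsilon^{-1})$ bound.
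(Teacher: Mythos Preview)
Your proposal is correct and follows essentially the same approach as the paper: both restrict each edge $e$ to an activity window of $\lambda = \log\epsilon^{-1}+O(1)$ scales starting at $\scale(e)$, use the bound $\mu_i$ on the total post-scale-$i$ drift of $yz(e)$ to cap the late-cutoff violation at $O(\epsilon w(e))$, and then sum $O(\epsilon^{-1})$ work per edge over the window. Your write-up is slightly more explicit than the paper's sketch (you spell out the geometric sum behind the drift bound and flag the blossom-invariant check), but the argument is the same, including the appeal to Lemma~\ref{lem:reduction} for the $f$-edge cover half.
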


\section{An $O(m\alpha(m, n))$ Time 
Augmenting Walk Algorithm} \label{sec:aug-walk}

In this section, we show how to implement the augmentation and blossom formation steps in $O(m \alpha(m, n))$ time. 
The goal of the augmentation step is to find a set of augmenting walks \underline{\emph{and}} alternating 
cycles in the contracted eligible subgraph, such that after the removal of these cycles and walks, the subgraph no longer contains any augmenting walks. In the blossom formation step, we are given a contracted graph without any augmenting walks. The goal is to find a maximal set of \emph{vertex-disjoint} reachable and contractable blossoms, i.e., a set of blossoms whose contraction will leave the graph without any reachable and contractable blossoms.

We formalize these as the following two problems:

\begin{definition} \label{def:p1}
In the \emph{Maximal Disjoint Walks and Cycles Problem} and the \emph{Maximal Disjoint Blossoms Problem}, we are given a graph $G = (V, E)$, where $V$ is partitioned into two sets $V_s$, $V_b$,
an $f$-matching $M$,
and a partial function
$\eta: V_b \mapsto E$ such that
$\eta(v) \in \delta(v)$ if $\eta(v)$ exists.
Here $v\in V_b$ represents a contracted blossom
and $\eta(v)$ the incident base edge, if any.
The goal of the \emph{Maximal Disjoint Walks and Cycles Problem} is to find a set of augmenting walks 
$\Psi$ and a set of alternating cycles $\mathcal{C}$, 
where all cycles and walks are mutually disjoint, such that after removing all edges in $\mathcal{C}$ and $\Psi$, 
the remaining graph $G$ does not contain any augmenting 
walks. 
For the \emph{Maximal Disjoint Blossoms}, we are guaranteed that the graph does not contain any augmenting walks. 
The goal is to find a maximal set $\Omega$ of \emph{vertex disjoint} reachable blossoms. 
\end{definition}

There are several subtleties in this definition. $G$ here is used as a contracted graph obtained by contracting a set of nested blossoms $\Omega$ from an underlying graph. Therefore, augmenting walks and alternating cycles are defined according to Definition~\ref{def:AP-f-factor}
and the definition of \emph{alternation} from Section~\ref{sec:small-weight}, by treating $\eta(v)$ as $v$'s base edge when $v$ represents a nontrivial blossom.
As a result, an alternating cycle in $G$ might not have even length in $G$ and an augmenting walk might not have odd length in $G$. It is guaranteed, by Lemma~\ref{lem:blossom-walk}, that the preimages of these cycles and walks 
in the underlying graph are odd and even, respectively.

Moreover, it is \emph{not} guaranteed that no augmenting walk exists in the underlying graph after removing the image of $\Psi$ and $\mathcal{C}$ in it.\footnote{This is because multiple augmenting walks in the underlying graph can intersect a single blossom in $\Omega$ before we contract the blossom, while after contracting a blossom, any augmenting walk or alternating cycle going through the blossom will forbid the other walks and cycle to use the same blossom again (as it must go through the base edge).} However, it is sufficient since in the proof of Lemma~\ref{lem:dual-adjustment}, we only use the fact that the \emph{contracted graph} does not contain any augmenting walks.

The maximal disjoint paths and cycles problem is noticeably different from the problem solved in \cite[\S 8]{GT91} for $1$-matching. Instead of looking for a \emph{maximal} set of \emph{vertex disjoint} augmenting paths, we look for a set of \emph{edge disjoint} augmenting walks $\Psi$ in conjunction with a set of \emph{alternating cycles} $\mathcal{C}$ whose removal removes all augmenting walks from $G$.

Both algorithms try to search for a set of augmenting paths/walks by building an alternating structure $S$ (not necessarily a tree) whose topology is defined in Section~\ref{sec:small-weight}. However, in $1$-matching, the search structure branches only at outer singletons and blossoms, while in $f$-matching, it also branches at inner singletons. As a result, when the search process reaches a vertex $v$, it also assigns $v$ an inner/outer tag to remind the algorithm whether it is looking for an unmatched/non-$\eta$ edge, or a matched/$\eta$ edge to continue extending the structure. A vertex can obtain both inner and outer tags, but only one of them is the \emph{primary} one where the search procedure uses it to decide which edge to explore next. If a vertex has two tags, then the non-primary one must be \emph{exhausted}, meaning the algorithm has already finished exploring all eligible edges with respect to that tag.

A key difference between $1$-matching and $f$-matching is that augmenting walks can be non-simple, i.e., they may contain an alternating cycle as a subwalk. Consequently, when the search process reaches an outer (inner) singleton $u$, it can potentially find, through an unmatched (matched) edge an inner (outer) singleton $v$ that has already been visited before in the same search, and proceed to discover an augmenting walk.
This phenomenon is illustrated in Figure~\ref{fig:cycle}.
If the algorithm intends to discover $(v_0, v_1, v_2, v_3, v_4, v_1, v_5, v_6)$ as an augmenting walk, it will reach $v_1$ with inner tag twice; first from $v_0$, then from $v_4$. Notice that in ordinary matching, edge $(v_1, v_5)$ cannot exist and edge $(v_4, v_1)$ is ignored as it provides no useful information regarding whether $v_1$ is an inner/outer vertex.

One might propose to ignore and discard the edge $(v_4, v_1)$ and return the simple path $(v_0, v_1, v_5, v_6)$. However, edges like $(v_4, v_1)$ cannot simply be discarded from future searches as they might participate in other augmenting walks, say
$(v_{10}, v_9, v_4, v_1, v_8, v_7)$ that is edge disjoint from $(v_0, v_1, v_5, v_6)$.  To achieve a near linear running time, it is essential that edges
of this type only get scanned a bounded number of times.

Following the spirit of DFS, we wish to maintain that every search path is not self-intersecting, i.e., each vertex is visited at most once
with an \emph{inner} tag, and once with an \emph{outer} tag.
Whenever we discover an edge that leads to a self-intersection (e.g. $(v_4, v_1)$ in Figure~\ref{fig:cycle}), we augment along the alternating cycle introduced by this edge (e.g. $(v_1, v_2, v_3, v_4, v_1)$) and thereby remove every edge on the cycle from the future searches. We backtrack to $v_1$ and the search continues, say, to the edge $(v_1, v_5)$. 
This action has the same effect as allowing augmentation along the non-simple augmenting walk 
$(v_0, v_1, v_2, v_3, v_4, v_1, v_5, v_6)$, 
but conceptually avoids a self-intersecting search structure and thus makes the analysis simpler.

\begin{figure} \label{fig:cycle}
  \centering
  \includegraphics[scale=1.0]{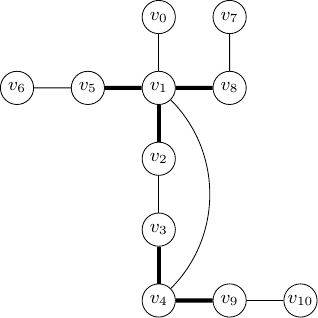}
  \caption{Example of a self-intersecting search structure and nonsimple augmenting walk. Here $v_0$ is the root of the search structure and $\{v_0, v_2, v_4, v_5\}$ is the set of outer vertices and $\{v_1, v_3\}$ is the set of inner vertices. The search begins with $v_0$ and proceed to $v_1, v_2, v_3, v_4$ in order. The procedure then scan the edge $(v_4, v_1)$ and because it connect an outer vertex to an inner vertex that is already visited, it might ignore the edge and backtrack to $v_1$ and return the augmenting walk $\langle v_0, v_1, v_5, v_6\rangle$. However, although $(v_1, v_4)$ is scanned and ignored, it cannot be discarded from future search as another augmenting walk, such as the dashed walk $\langle v_7, v_8, v_1, v_4, v_9, v_{10}\rangle$ might make use of the edge $(v_1, v_4)$ and $\Psi$ will not be maximal if $(v_1, v_4)$ participating in some augmenting walks.}
\end{figure}

\paragraph{Overview of the algorithm.}

We present one algorithm that solves both problems. The algorithm has two modes, the \emph{augmenting walk mode} and the \emph{blossom mode}. The augmenting walk mode receives a contracted graph as input and outputs a maximal set of augmenting walks and alternating cycles. The algorithm also discovers a set of nested blossoms in the process of searching for augmenting walks, but ignores these in the output. On the other hand, the blossom mode only takes in a contracted graph that does not have any augmenting walks. It uses the same algorithm to look for the set of augmenting walks, but ignores any alternating cycles it encounters. In outputs the set of nested blossoms discovered in the process. In this section, we will first present the algorithm in the augmenting walk mode in detail, then note 
the changes needed to 
put it in blossom mode.

\medskip

In this algorithm, we follow a standard recursive framework for computing a maximal set of edge disjoint paths, see \cite[\S 9]{Sch02}. The algorithm proceeds in phases. In phase $i$, $i \geq 1$, we choose a vertex $r$ that is still unsaturated after augmentations in phases $1, 2, \ldots, i-1$, and call a procedure \texttt{SEARCH-ONE} from this vertex.
\texttt{SEARCH-ONE} searches for an augmenting walk from $r$, and terminates with a possibly empty augmenting walk $P_i$ along with a set of disjoint alternating cycles $\mathcal{C}_i$. It guarantees that either $P_i$ is nonempty, or in the case when $P_i$ is empty, no augmenting walk starting from $r$ can reach another unsaturated vertex without intersecting $\mathcal{C}_i$. It then augments along $P_i$ and $\mathcal{C}_i$. The phase ends by discarding the set of edges encountered by the search procedure.

Formally, the input to \texttt{SEARCH-ONE} is a subgraph $G_i = (V_i, E_i)$ of $G$, an $f_i$-matching $M_i$ where $M_i \subseteq M$ and $f_i(v) \leq f(v)$ for all $v \in V_i$ , and an unsaturated vertex $r \in V_i$ with respect to $f_i$ and $M_i$. \texttt{SEARCH-ONE} finds an augmenting walk $P_i$ (possibly empty), a set of alternating cycles $\mathcal{C}_i$ and a set of edges $H_i \subseteq E_i$ that satisfy the following property.

\begin{property} \label{pro:max-ap}
Any augmenting walk that intersects $H_i$ must also intersect $P_i$ or a cycle in $\mathcal{C}_i$.
\end{property}

After \texttt{SEARCH-ONE} terminates, we finish the phase by removing $H_i$ from $G_i$. If $P_i$ is empty, we also remove the vertex $r$ from $G_i$. Let the resulting graph be $G_{i+1}$. Define the $f_{i+1}$-matching $M_{i+1}$ by $M_{i+1} = M_i \setminus H_i$ and
\[
f_{i+1}(v) =
\begin{cases}
f_i(v) - |M_i \cap H_i \cap \delta(v)| - 2|M_{i} \cap H_i \cap \delta_0(M_i)| \\
\hspace{1cm} \parbox[t]{0.8\textwidth}{If $v$ is not a terminal vertex of $P_i$.} \\
f_i(v) - |M_i \cap H_i \cap \delta(v)| - 2|M_{i} \cap H_i \cap \delta_0(M_i)| - 1 \\
\hspace{1cm} \parbox[t]{0.8\textwidth}{If $P_i$ is a nonempty nonclosed  augmenting walk and $v$ is one of the two terminal vertices of $P_i$.} \\
f_i(v) - |M_i \cap H_i \cap \delta(v)| - 2|M_{i} \cap H_i \cap \delta_0(M_i)| - 2 \\
\hspace{1cm} \parbox[t]{0.8\textwidth}{If $P_i$ is a nonempty closed  augmenting walk  that starts and ends with $v$.}
\end{cases}
\]

Conceptually, this change restricts the $f_i$-matching $M_i$ to the subgraph $G_{i+1}$ while maintaining the property that each vertex still has the same deficiency, except for the terminal vertices
of $P_i$, whose deficiencies are decremented by $1$
(or $2$ for closed walks) after augmenting along $P_i$.
Finally, the algorithm adds the path $P_i$ and cycles $\mathcal{C}_i$ to $\Psi$ and $\mathcal{C}$, respectively, and terminates phase $i$.

\paragraph{A detailed illustration of \texttt{SEARCH-ONE}.}

The call to \texttt{SEARCH-ONE} in phase $i$
maintains a laminar set of blossoms $\Omega_i$
over vertices in $G_i$.
Here $\Omega_i$ only contains the blossoms newly discovered in the search procedure and does not include the already contracted blossoms inherited from the input (vertices in $V_b$).
In this section, we use the word \emph{blossom} solely for the newly discovered blossoms in $\Omega_i$ and \emph{blossom vertices} for blossoms inherited from the input, i.e., vertices in $V_b$.
\emph{Singletons} still  refer to vertices in $V_s$.
We use $B(v)$ to denote the inclusion-wise maximal blossom in $\Omega_i$ that contains $v$ and $\beta(v)$ to denote the base of $B(v)$. If $v$ is not contained in any nontrivial blossom in $\Omega_i$, we define $B(v) = \{v\}$ and $\beta(v) = v$.
Each blossom also might have a base edge $\eta(B)$. 
We denote the search structure on $G_i$ with $S_i$, and use $T_i$ to denote the search structure obtained from $S_i$ by contracting all blossoms in $\Omega_i$. 
Similar to \cite[\S 8]{GT91}, the search structure $S_i$ is a subgraph of $G_i$ but not necessarily a tree, while we maintain that $T_i$ must be a tree. 
$T_i$ is represented by storing parent pointers.
If $B$ is a singleton or blossom in $\Omega_i$,
let $\tau(B)$ be the edge joining $B$ 
to its parent in $T_i$.

Blossoms are maintained using a data structure that supports the following operation: given a blossom $B$, a vertex $v$ in blossom $B$, and a bit $s \in \{0, 1\}$, the data structure returns the alternating walk $P_s(v)$ from $v$ to $\beta(B)$ whose existence is guaranteed by Lemma~\ref{lem:blossom-walk}, in time linear in the length of the walk. This can be done using simple bookkeeping as in Gabow's implementation for Edmonds algorithm~\cite{Gab76} and we leave the details to the readers.  We use a union-find data structure to find the outermost blossom containing a vertex, i.e., given $v\in V$,
find $B(v) \in \Omega_i$.  Furthermore, each blossom $B$ (outermost or not) maintains a pointer to its base $\beta(B)$.

\texttt{SEARCH-ONE} explores the graph in a depth first fashion:
The search begins at an unsaturated singleton or an unsaturated blossom vertex $r$ in $G_i$.
Similar to DFS, when the locus of the search is at $u$ we have an alternating walk $P(u)$ from $r$ to $u$ in $G_i$.
We call $u$ the \emph{active vertex} and $P(u)$ the \emph{active walk}. 
For efficiency purposes, we do not maintain the active walk explicitly.
Instead, we maintain a contracted active walk $\widehat{P}(u)$. 
The contracted active walk $\widehat{P}(u)$ is of the form $\langle B_0, e_0, B_1, e_1, \ldots, e_{k-1}, B_k\rangle$, where $r \in B_0$, $u \in B_k$ and each $B_j$ are either singletons or blossoms (not necessarily maximal) in $\Omega_i$. 
Each edge $e_j$ connects $B_{j}$ to $B_{j+1}$ and 
edges $e_j$ and $e_{j+1}$ \emph{alternate} at 
$B_{j+1}$ for all $0 \leq j < k-1$.
The active walk $P(u)$ can be reconstructed from $\widehat{P}(u)$ in time $O(|P(u)|)$ using the blossom data structure mentioned above. 
To maintain the property that the active walk is alternating, the algorithm also assigns inner and outer tags to each vertex in $S_i$ according to Definition~\ref{def:in-out}. 
There is one more issue that arises in our algorithm compared to the similar DFS routine in \cite[\S 8]{GT91}. Upon discovering and augmenting along an alternating cycle $C$, the search structure $S_i$ becomes \emph{disconnected} from the root since $C$ has been effectively removed from the graph. 
Therefore, we remove all vertices and edges that descend from $C$ in $T_i$ from $T_i$ and also remove their preimages from $S_i$.
However, we still maintain some useful information about these vertices and edges. This includes the blossom structure, the former parent pointers of vertices and blossoms, as well as all the tags that vertices and edges carry and whether they are exhausted or not.

Initially, the contracted active walk consists of a single vertex $r$, and $r$ is labelled outer.
At each iteration, the algorithm \emph{explores} a new edge $(u, v)$ incident to the active vertex $u$ that is \emph{eligible for} $u$ with respect to $u$'s current inner/outer tag. Notice that this immediately defines an inner/outer tag for $v$.
On exploring the edge $(u, v)$, the algorithm does one of the following depending on the location of $v$ with respect to the search structure, and the tag $v$ is carrying. 

\begin{enumerate}[itemsep=0pt]
  \item {\it Augmentation}: When $v$ is an unsaturated singleton and $(u, v)$ is unmatched, or $v$ is an unsaturated blossom vertex, or when $v = r$ is the root of the search tree and the deficiency of $v$ is at least $2$, $P(u) \circ (u, v)$ forms an augmenting walk. We extend the active path with $(u,v)$, terminate the search and set the active walk $P(v) = P(u) \circ (u, v)$ and $P_i = P(v)$. In this step, the edge $(u, v)$ is considered \emph{explored from $u$}.

  \item {\it DFS Extension}: If $v$ is a singleton and $v$ has never been exhausted before with the same tag, add $(u, v)$ to the search structure $S_i$ and make $B(v)$ a child of $B(u)$ in $T_i$. Set the active vertex to $v$ and extend the contracted active walk $\widehat{P}(v) = P(u) \circ (u, v) \circ B(v)$ accordingly. In this step, edge $(u, v)$ is considered \emph{explored from $u$}.
  
  \item {\it Cycle Cancellation}: If both $B(u)$ and $B(v)$ are in $T_i$ and $B(v)$ is an ancestor of $B(u)$ and $(u, v)$ is not eligible for $v$, the tree path from $B(v)$ to $B(u)$, along with the edge $(u, v)$, forms an alternating cycle $C$. We add $C$ to $\mathcal{C}_i$. Retract the active walk back to $v$.
  After this step, all edges $e \in C$ will be categorized as
  explored from \emph{both endpoints}.

  This step effectively disconnects all descendants of $C$ from the root.\footnote{It is still possible that later we discover some path from a descendant of $C$ back to the root that circumvents $C$} We remove this part from the search structure $S_i$ and $T_i$. However, old parent pointers that were used to maintain $T_i$ are still kept.
  
  \item {\it Blossom Formation}: If $B(u) \neq B(v)$,  $B(v)$ is not in $\widehat{P}(u)$, and edge $(u, v)$ is also eligible for $v$, then we can potentially form a new blossom. We do so by extending the active walk from $B(u)$ to $B(v)$, then to the chain of ancestors of $B(v)$ encountered when following parent pointers, until one of the following stopping conditions is met. 
  If we append $B(u)$ 
  to the active walk then we stop, with 
  a new blossom having been detected. The process may stop prematurely if 
  (i) the next parent edge is already marked \emph{explored} because it is in $\mathcal{C}_i$
  or (ii) if the head of the walk is exhausted according to its tag.
  
  Notice that this step puts new tags onto vertices in these blossoms and also attaches them onto the search tree $T_i$.
  Let $\tilde{P}$ be the suffix of $\widehat{P}$ starting from $B(v)$.
  If $\tilde{P}$ ends at $B(u)$, we have identified a blossom $B$ whose subblossoms consists of the set of blossoms in $\tilde{P}$. We contract the blossom by putting $B$ into $\Omega_i$ and update the union-find data structure accordingly, 
  then contract all vertices of 
  $\tilde{P}$ in $T_i$. 
  
  If $\tilde{P}$ ends prematurely before getting to $B(u)$, we cannot contract the blossom.  This search has effectively appended $\tilde{P}$ to $\widehat{P}$ with DFS extensions and the search proceeds from the last element of $\widehat{P}$ as usual.

  In both cases, edges are considered explored in the direction of the active walk when they first enter the active walk, and exhausted after they leave the active walk.

  \item {\it DFS Retraction}: If every edge $(u, v)$ eligible for $u$ has already been explored, retract from $u$ to its predecessor on the (contracted) active walk. If $u=r$ is the only vertex in the search path, terminate the search with $P_i = \emptyset$. Otherwise, let $w$ ($B(w)$) be the parent of $u$ in the (contracted) active walk. The edge $(w, u)$ is now considered \emph{exhausted from $w$}. This means that every edge eligible for $u$ is recursively exhausted and no augmenting walk can be found by following the active walk via the edge $(w, u)$. (It may still be possible to find an augmenting walk via edge $(w, u)$ when the search visits $u$ again in a blossom formation step and explore $(u, w)$ from $u$). Moreover, the vertex $u$'s primary tag is now consider \emph{exhausted}.

  \item {\it Null Exploration}: This step includes all scenarios where we explore the edge $(u, v)$ to no effect. This includes: when $B(v)$ is a descendant of $B(u)$ and $(u, v)$ is not eligible for $v$; when $B(v)$ is an ancestor of $B(u)$ and $(u, v)$ is eligible for $v$; or when $(u, v)$ is a cross edge in $T_i$. In these cases, we ignore the edge $(u, v)$ while still categorizing it as \emph{exhausted from $u$}.
\end{enumerate}

As stated above, each edge $(u,v)$ along with an endpoint of it,
say $u$, has one of three statuses at any point in the algorithm:
\begin{enumerate}
    \item \emph{Explored from $u$}: This means the search has visited the vertex $u$, extended the active walk from $u$ to $v$ using edge $(u, v)$, in either Augmentation, DFS extension, Blossom Formation, or Cycle Cancellation step.
    \item \emph{Exhausted from $u$}: This means the search has visited the vertex $u$, extended the search path to $v$ via $(u, v)$ and then backtracked to $u$ in DFS Retraction or Null Exploration steps.
    \item \emph{Unexplored from $u$}: If $(u, v)$ is not considered explored or exhausted from $u$, it is then unexplored from $u$. This means the search has either yet to visit $u$;
    or the search has visited $u$ but never extended the active walk using the edge $(u, v)$ because it is ineligible for $u$;
    or it is eligible but the search has yet to explore $(u, v)$.
\end{enumerate}

Finally, the edge set $H_i$ is the set of edges that are explored or exhausted from at least one of their endpoints during the search. Recall this is the set we remove from the graph $G_i$ before termination of a phase. This completes our description of the \texttt{SEARCH-ONE} procedure.
Now we state the set of invariants satisfied by 
\texttt{SEARCH-ONE}.

\begin{invariant} ~ \label{inv:basic}
\begin{enumerate}
    \setlength{\itemsep}{0pt}
    \item \emph{Structural Invariant of $S_i$:} $S_i$ consists of a subset of edges in $G_i$ that are visited during the search. Every vertex in $S_i$ carries an inner tag or outer tag or both. When a vertex is outside the active walk, all its tags are exhausted. When it is inside the active walk, one tag is the primary tag. The other tag, if it exists, must be exhausted. 
    Furthermore, if $v$ is inner, there exists an alternating walk from $r$ to $v$ that ends with an unmatched edge if $v \in V_s$ or a non-$\eta$ edge if $v \in V_b$. If $v$ is outer, the alternating walk terminates with a matched edge if $v \in V_s$ or an $\eta$ edge if $v \in V_b$. These walks avoid $\mathcal{C}_i$. 

    \item \emph{Structural Invariant of $T_i$:} $T_i$ is a contracted graph obtained from $S_i$ by contracting all inclusionwise-maximal blossoms in $\Omega_i$. $T_i$ must be a tree. Some blossoms in $\Omega_i$ might not be represented in $T_i$.

    \item \emph{Depth-first property of $S_i$:} The union of the active walk and the set of alternating cycles $\mathcal{C}_i$ consists of precisely the edges that are explored but not exhausted from at least one endpoint.
    If $(u, v)$ is an edge in $H_i$ but not in the active walk or the alternating cycles, then $(u, v)$ must be exhausted from $u$ or $v$. 
    Moreover, $S_i$ is disjoint from $\mathcal{C}_i$.

    \item \emph{Maximality} If $(u, v)$ is marked exhausted from $u$ while $(v, w)$ is an edge eligible for $v$, then the algorithm must have exhausted $(v, w)$ from $v$. This holds for all edges regardless whether they are in $S_i$ or not.
    
    \item \emph{Parent Pointers: } Fix any blossom $B$ in $\Omega_i$, possibly trivial and possibly not in $T_i$.  If the parent pointer $\tau(B)$ is defined, consider the path generated by following parent pointers from $B$, terminating when one of the following conditions is met: 
    (i) the path reaches $r$, 
    (ii) the next edge in the path would be in $\mathcal{C}_i$, 
    (iii) the last vertex in the path is exhausted w.r.t. the appropriate tag,
    or (iv) the last vertex in the path is in $\widehat{P}$.  This path is well defined
    and is alternating.
\end{enumerate}
\end{invariant}

\begin{lemma}
\emph{Augmentation}, \emph{DFS Extension}, \emph{Blossom Formation}, \emph{DFS Retraction}, \emph{Null Exploration} and \emph{Cycle Cancellation} all preserve Invariant~\ref{inv:basic}.
\end{lemma}

\begin{proof}
The first invariant follows from how we grow the search structure $S_i$ and active walk. When the active walk extends to a vertex $v$ with the current tag outer, the active walk must be an alternating walk ending with a matched edge or an $\eta$ edge. If the tag is inner, the active walk ends with a unmatched edge or a non-$\eta$ edge. This ensures that there exists an alternating walk from the root to each vertex in $S_i$ with a terminal edge that is consistent with its tag. Moreover, tags are labelled exhausted if and only if a vertex carrying the tag leaves the active walk by backtracking.

The second invariant follows from the fact that when we form a blossom in the Blossom Formation step, the constituent (subblossoms) in $\Omega_i$ always come from a connected ancestor-descendant path in $T_i$. Contracting a connected component in the tree will not create any cycle and thus $T_i$ remains a tree.

For the third invariant, observe that an edge becomes explored from an endpoint when it joins the active walk 
in a DFS Extension, Blossom Formation, or Augmentation step. It becomes exhausted when it leaves the active walk in a DFS Retraction, Null Exploration, or Cycle Cancellation step. Moreover, in the Blossom Formation step, since we are visiting vertices in descendant-to-ancestor direction, all edges in the active walk must remain explored and edges outside active walk are exhausted.
Therefore, any edge in $H_i$ that is not in the active walk or alternating cycles must be exhausted from at least one of its endpoints.
Lastly, in the Blossom Formation and Cycle Cancellation steps, we specifically enforce that the active walk never uses 
any edge inside $\mathcal{C}_i$.

For the fourth invariant, first notice that $(u, v)$ becomes exhausted via a DFS Retraction step or a Null Exploration step. In both cases the search must have retracted from $v$ to some vertices and therefore has explored and exhausted every edge eligible for $v$, including $(v, w)$. If $w$ is an unsaturated singleton and $(v, w)$ is unmatched, or $w$ is an unsaturated blossom vertex, an Augmentation step would have occurred when the algorithm explores $(v, w)$ and left the edge $(v, w)$ explored and not exhausted.

For the fifth invariant, consider a blossom $B$ and the path starting from $B$ following the parent pointers. We call this path the \emph{ancestral path} from $B$. By the inductive hypothesis, 
the ancestral path alternates until it 
ends by reaching $r$, or an exhausted vertex, 
or an alternating cycle edge, 
or the active walk.\footnote{Getting to the active walk does not automatically imply that you can then get to the root, since the ancestral path might not alternate at the vertex when it first reaches the active walk} 
Now consider how this path may change in a Cycle Cancellation, DFS Extension, DFS Retraction,
or Blossom Formation step.
In a Cycle Cancellation step, we might shorten the path by including one of its edges in an alternating cycle.
This preserves the invariant. In a DFS Extension or Blossom Formation step, the active walk might extend into the ancestral path, making the ancestral path terminate earlier.
This also preserves the invariant. 
A DFS retraction can remove the last vertex from $\widehat{P}$ that appears on the ancestral path.  That vertex is by definition exhausted for its tag type, so the ancestral path terminates at the same point as before, but for a different reason.  In all cases the invariant is preserved.  Finally, notice that for any blossom inside $T_i$, the ancestral path from that blossom always alternates until it reaches the root $r$. 
\end{proof}

Now we state the correctness of \texttt{SEARCH-ONE}:

\begin{lemma} \label{lem:search-one-cor}
When \texttt{SEARCH-ONE} terminates, if there is an augmenting path $P'$ that intersects $H_i$ at some edge $e$, then $P'$ must intersect $P_i$ or $\mathcal{C}_i$ at some edge.
\end{lemma}

\begin{proof}
Assume for contradiction that $P'$ is edge-disjoint with $P_i$ and $\mathcal{C}_i$. Let $P'$ intersect $H_i$ at some edge $(u_0, u_1)$. Since $(u_0, u_1)$ is not in $P_i$ or $\mathcal{C}_i$, $(u_0, u_1)$ must be exhausted in one of its directions, say from $u_0$. This makes $(u_0, u_1)$ eligible for $u_0$. Now let $(u_0, u_1, \ldots, u_k)$ be the subpath of $P'$ from $u_0$ to the terminal vertex $u_k$ of $P'$ in the $(u_0, u_1)$ direction. We use induction to show that for all $0 \leq i < k$, edges $(u_i, u_{i+1})$ must be eligible for $u_i$ and exhausted from $u_i$:

The base case $i = 0$ holds by our assumption. Now suppose $(u_i, u_{i+1})$ is exhausted from $u_i$ for some $i \geq 0$. Consider the edge $(u_{i+1}, u_{i+2})$.
It is necessary that $u_{i+1}$ was in the search structure $S$ when $(u_i, u_{i+1})$ is marked exhausted from $u_i$,
and at this moment $u_{i+1}$ is either inner or outer or both. In particular, at this moment $u_{i+1}$ must still own the tag that is consistent with the predecessor edge $(u_i, u_{i+1})$. 
Here, by consistent we mean the tag defined by Definition~\ref{def:in-out}, treating the edge $(u_i, u_{i+1})$ as $\tau(u_{i+1})$.
Combined with the fact that the edge $(u_i, u_{i+1})$ alternates with the edge $(u_{i+1}, u_{i+2})$, it is necessary that $(u_{i+1}, u_{i+2})$ be eligible for $u_{i+1}$.
Hence by Invariant~\ref{inv:basic}, it must also be exhausted from $u_{i+1}$.

This means the edge $(u_{k-1}, u_{k})$ must be eligible for $u_{k-1}$ and exhausted from $u_{k-1}$. Notice that the vertex $u_k$ and edge $(u_{k-1}, u_k)$ must satisfy the terminal vertex and edge requirement in Definition~\ref{def:AP-f-factor}. But in this case, an augmenting walk would have been formed when the algorithm was exploring the edge $(u_{k-1}, u_k)$ from $u_{k-1}$, which put the edge in $P_i$, which is a contradiction.
\end{proof}

This gives the following lemma.
\begin{lemma} \label{lem:search-one}
\texttt{SEARCH-ONE} finds in time $O(m_i \alpha(m_i, n_i))$ an edge set $H_i$, a set of alternating cycles $\mathcal{C}_i$ and an augmenting walk $P_i$ such that any augmenting walk $P'$ disjoint from $\mathcal{C}_i$ that intersects $H_i$ must also intersect $P_i$. Here $m_i$ and $n_i$ are 
the number of edges and vertices in $H_i$.
\end{lemma}
\begin{proof}
The correctness of \texttt{SEARCH-ONE} is argued in Lemma~\ref{lem:search-one-cor}. For running time, notice that each edge we examined is always classified as explored or exhausted from at least one of its endpoints.
Thus, the total number of edge examinations is $O(m_i)$.
The only non-trivial data structure needed is one for maintaining
the set of blossoms, which takes 
$O(m_i\alpha(m_i,n_i))$ time 
with the standard union-find algorithm~\cite{Tarjan75}. 
For reconstructing the active walk, we can use the bookkeeping labelling in \cite[\S 8]{GT91}, which enable reconstruction of the augmenting walk in time linear in the length of the walk.
\end{proof}

\begin{lemma} \label{lem:lin-ap}
We can find in $O(m\alpha(m, n))$ time 
a set of augmenting walks $\Psi$ and a set of 
alternating cycles $\mathcal{C}$ such that any 
augmenting walk $P'$ must intersect $\Psi$ 
or $\mathcal{C}$.
\end{lemma}
\begin{proof}
This algorithm can be seen as a recursive algorithm that first calls \texttt{SEARCH-ONE} on an input graph $G_1 = G$, finding an edge set $H_1$, a set of alternating cycles $C_1$ and an augmenting walk $P_1$. It removes $H_1$ from $G_1$ and the corresponding part in the $f$-matching to obtain $G_2$. Then it recurses on $G_2$. Let $\mathcal{C}'$ and $\Psi'$ be the output of the recursive call. We output $\mathcal{C} = \mathcal{C}_1 \cup \mathcal{C}'$ and $\Psi = \Psi' \cup \{P_1\}$.

By induction, any augmenting walk $P'$ in $G_2$ must intersect $\Psi'$ or $\mathcal{C}'$. Now suppose the augmenting walk $P'$ contains an edge in $H_1$. By Lemma~\ref{lem:search-one}, $P'$ must intersect $P_1$ or $\mathcal{C}_1$. Therefore $P'$ must intersect $\Psi$ or $\mathcal{C}$.
\end{proof}

\paragraph{Finding a maximal set of nested blossoms}

Now we present the blossom mode of the algorithm. The blossom mode is the same as the augmenting walk mode except for one step: It ignores alternating cycles it encounters. More specifically, when encountering an edge $(u, v)$ from $u$ in the search tree where $B(v)$ is the ancestor of $B(u)$, and $(u, v)$ is eligible for $u$ but not for $v$, we will not invoke the cycle cancellation step and add the fundamental cycle associated with $(u, v)$ to the set $\mathcal{C}$. Instead, we label the edge $(u, v)$ as exhausted from $(u)$, which removes it from any future search as it does not provide any useful information for the blossom search. 

Notice that since the input graph does not contain any augmenting walks, the status of each edge in the graph upon termination of a \texttt{SEARCH-ONE} execution is either unexplored or exhausted. By construction, the search trees rooted at each unsaturated vertex are necessarily edge disjoint. Moreover, since a vertex cannot be visited twice in two different search tree with different labels or we will have an augmenting walk, when a vertex is visited a second time from another search tree, all eligible edges must already be exhausted and the search will always retract immediately. Therefore, we can simply remove any visited vertex in a search tree from any future search and have these search trees be vertex disjoint.

Now we argue that the blossom set $\Omega'$ contains a maximal set of reachable blossoms. 
Consider a blossom $B$ that is reachable from one of the unsaturated vertices after the augmentation step. Using a similar induction to the one in Lemma~\ref{lem:search-one-cor}, we can show that all edges in the path from the root to the blossom $B$, and all edges in $B$, 
must be explored by the algorithm. Moreover, by the structure of the blossom, one of the edges inside the blossom that are incident to the base must be eligible for both endpoints at some point of the search and therefore must be explored in both directions. We show that the two endpoints of this edge must already be in a blossom in $\Omega'$.

\begin{lemma} ~\label{inv:same-blossom}
If $(u, v)$ is an edge that is once explored from both $u$ and $v$, then $u$ and $v$ must be in the same blossom in $\Omega'$.
\end{lemma}

\begin{proof}
Notice that in depth-first search, when the algorithm explores $(u, v)$ from both directions, $B(u)$ and $B(v)$ must be in an ancestor-descendant relationship in $T_i$. 
Now without loss of generality, we assume $(u, v)$ is first explored from $u$. If $v$ is a descendant of $u$, since the search has already backtracked from $v$, the only way that $v$ enters the search again is by a blossom formation step from ancestor of $u$ to a descendant of $v$, making them in the same blossom. If $v$ is an ancestor of $u$, when $(u, v)$ is explored from $v$, i.e., when the search backtracks from $u$ to $v$, $u$ must still be a descendant of $v$ because any blossom step in this process will not change the ancestor-descendant relation between $u$ and $v$. Then we have a blossom step triggered by $(u, v)$ and put them in the same blossom.
\end{proof}

This necessarily show that after contracting blossoms $\Omega'$, there are no more reachable blossom in the graph, so $\Omega'$ is a maximal set of blossoms. 
\section{Conclusion}

We present $O_\epsilon(m\alpha(m, n))$-time algorithms
for $(1-\epsilon)$-maximum weight $f$-matching and
$(1+\epsilon)$-approximate minimum weight $f$-edge cover. 
The main contributions of this work are approximation-preserving reductions between
$f$-matching and $f$-edge cover, and 
a version of Edmonds Search for generalized matching 
under relaxed complementary slackness conditions.

Using similar argument to the one found in~\cite{GT91}, 
it is also possible to show that this algorithm will lead to an $O(\sqrt{f(v)}m\alpha(n,m))$ time algorithm for \emph{maximum cardinality $f$-matching} and \emph{minimum cardinality $f$-edge cover}. The idea is to first treat 
the cardinality problem as a unit-weighted
matching problem. By setting $\epsilon = 1/\sqrt{f(V)}$, we can obtain an $f$-matching (edge cover) with total surplus (deficiency) $O(\sqrt{f(V)})$. Then we discard the duals 
and blossoms and use any linear time 
algorithm for finding one augmenting walk (reducing walk) at a time to complete the matching (edge cover). The total running time is $O(\sqrt{f(V)}m\alpha(m,n))$.
\newpage

\bibliographystyle{abbrv}
\bibliography{ref}

 \newpage
% \appendix
%\input{apd-primer.tex}
%\input{apd-approx.tex}
%\input{apd-card.tex}
%\input{apd-bmatching.tex}

\end{document}